%% file: main_manuscript.tex
\documentclass{article} % For LaTeX2e
\usepackage{iclr2027_conference,times}

\input{math_commands.tex}

\usepackage{hyperref}
\usepackage{url}
\usepackage{amsmath,amssymb,amsthm,mathtools}

\title{Mesh-Native Tokenisation: Retrofitting Foundation Models to Unstructured Data}

\author{Georg Maierhofer, Miles Cranmer \& Payel Mukhopadhyay \\
University of Cambridge \\
\texttt{\{gam37,mc2473,pm858\}@cam.ac.uk}
}

\usepackage{graphicx}
\usepackage{enumitem}
\usepackage{subcaption}
\usepackage{placeins}
\usepackage[table]{xcolor}
\usepackage{booktabs}

\newtheorem{proposition}{Proposition}

\newcommand{\bx}{\mathbf{x}}
\newcommand{\bu}{\mathbf{u}}
\newcommand{\bv}{\mathbf{v}}
\newcommand{\bp}{\mathbf{p}}
\newcommand{\bW}{\mathbf{W}}
\newcommand{\bK}{\mathbf{K}}
\newcommand{\bb}{\mathbf{b}}
\newcommand{\NN}{\mathcal{N}}
\DeclareMathOperator{\Lip}{Lip}
\newcommand{\sci}[2]{#1\times10^{#2}}
\newcommand{\pmsci}[3]{(#1\pm #2)\times10^{#3}}
\iclrfinalcopy % Uncomment for camera-ready version, but NOT for submission.
\begin{document}

\maketitle
\lhead{} % preprint: drop the "Published as a conference paper" running head

\begin{abstract}
Physics foundation models are an increasingly popular approach for building versatile emulators for spatiotemporal systems. They require tokenisation layers that lift physical field data into a latent representation suitable for dynamic processing. A typical choice for this layer is a convolutional encoder, which lifts field information in a locality-aware way and can be trained very efficiently. However, this choice cannot be applied to a large class of scientific computing problems where data are represented on unstructured meshes, for example simulations on domains with curved boundaries. Such data typically require graph neural network architectures, which are more expensive to train and, in the form currently used, encode information differently from their CNN counterparts, thus limiting the portability of physics emulators between these data regimes. In this work we introduce Mesh-Native Convolution (MNC), a locality-aware tokenisation layer that encodes unstructured mesh-based data in a way that is analogous to CNNs. This approach leads to models that can be trained once on mesh data and then deployed to arbitrary unstructured meshes, even if these are irregular, with minimal retraining. We show that this tokenisation approach can be used to retrofit grid-pretrained foundation models efficiently on unstructured meshes and provide theoretical results to confirm the consistency of this approach with classical CNN encoders. We demonstrate the favourable performance of this retrofitting approach, which in many cases reaches acceptable accuracy even in zero-shot deployment, and outperforms existing baselines after only a few epochs of fine-tuning, thus providing an efficient bridge between the two data regimes.
\end{abstract}

\section{Introduction}
\looseness=-1
Machine-learning surrogates are increasingly used to predict the evolution of complex physical systems across a wide range of scientific domains~\citep{wyder2026common,yermakov2026the,riva2026ctf4nuclear}. Many recent large pretrained models for continuum physics adopt an encoder-processor-decoder architecture, in which discretised fields are encoded into latent tokens, evolved by a large processor, and decoded back to physical space~\citep{mccabe2024multiple,herde2024poseidon,mccabe2026walrus,mukhopadhyay2026overtone}. Many of the most advanced models build on processors that originate in computer vision, thus relying on tokenisation layers, which lift physical information, often using strided convolutions~\citep{mccabe2024multiple,mukhopadhyay2026overtone}. This has proven to be a particularly efficient choice, as dense convolutions on regular grid data are highly optimised on modern accelerators~\citep{chetlur2014cudnnefficientprimitivesdeep}, but it restricts the admissible data to structured grids (cf. Fig.~\ref{fig:overview}a). However, in many scientific applications field information is not available on regular grids due to the presence of irregular or curved boundaries (engineering parts, wave modelling in coastal regions, weather forecasting) or the requirement to represent information at finer resolution in regions of the computational domain that show finer features (e.g. boundary layers in fluids, or
turbulent structures behind jet engines). In such cases data are often represented on unstructured
meshes (cf. Fig.~\ref{fig:overview}b). 

\looseness=-1
While graph neural networks (GNNs) are a natural choice for processing such data, they are more expensive to train and, in the form currently used, encode information differently from their CNN counterparts, thus limiting the portability of physics emulators between these data regimes~\citep{pfaff2021meshgraphnets}. A convolutional token has an explicit physical footprint, which mirrors the dynamics it summarises: many continuum models are formulated using local differential operators, motivating tokens whose physical footprint is controlled independently of the mesh. Standard mesh-edge message passing instead defines neighbourhoods by graph hops, whose physical extent varies with the local edge length~\citep{gilmer17a,pfaff2021meshgraphnets}, and supplying coordinates as features does not remove this dependence directly. To overcome this, we introduce Mesh-Native Convolution (MNC), a locality-aware tokenisation layer that encodes unstructured mesh-based data analogously to CNNs. MNC specifies the window geometrically and aggregates information through a continuous analogue of a convolutional kernel (Fig.~\ref{fig:MNC_overview}). This leads to a highly versatile tokenisation layer that can be used to retrofit grid-pretrained foundation models efficiently on unstructured meshes. We show that this approach is consistent with classical CNN encoders, thus permitting the direct transfer of weights from grid-pretrained models, and demonstrate the mesh-agnostic performance of this approach in zero-shot deployment. In particular, \textbf{our contributions are as follows:}

\begin{enumerate}[topsep=0pt, partopsep=0pt, itemsep=0pt, parsep=0pt, leftmargin=*]
   \item \textbf{A mesh-native convolutional tokeniser.} We introduce mesh-native convolution (MNC), a tokenisation layer for unstructured mesh-based data. This encoder is able to lift continuous physical information from unstructured data points, removing the restriction to grid-based data.
   \item \textbf{Exact consistency with CNN encoders.} We prove that MNC is consistent with classical CNN encoders, thus permitting the direct loading of weights from grid-pretrained models.
   \item \textbf{Retrofitting pretrained foundation models to meshes.} We show through extensive numerical examples that retrofitting state-of-the-art physics emulators and foundation models with MNC reaches acceptable accuracy zero-shot (cf.\ Table~\ref{tab:exp1-zeroshot}) and outperforms existing baselines after only a few epochs of fine-tuning, thus providing an efficient bridge between the two data regimes.
\end{enumerate}

\begin{figure}[h!]
\centering
\includegraphics[width=\textwidth]{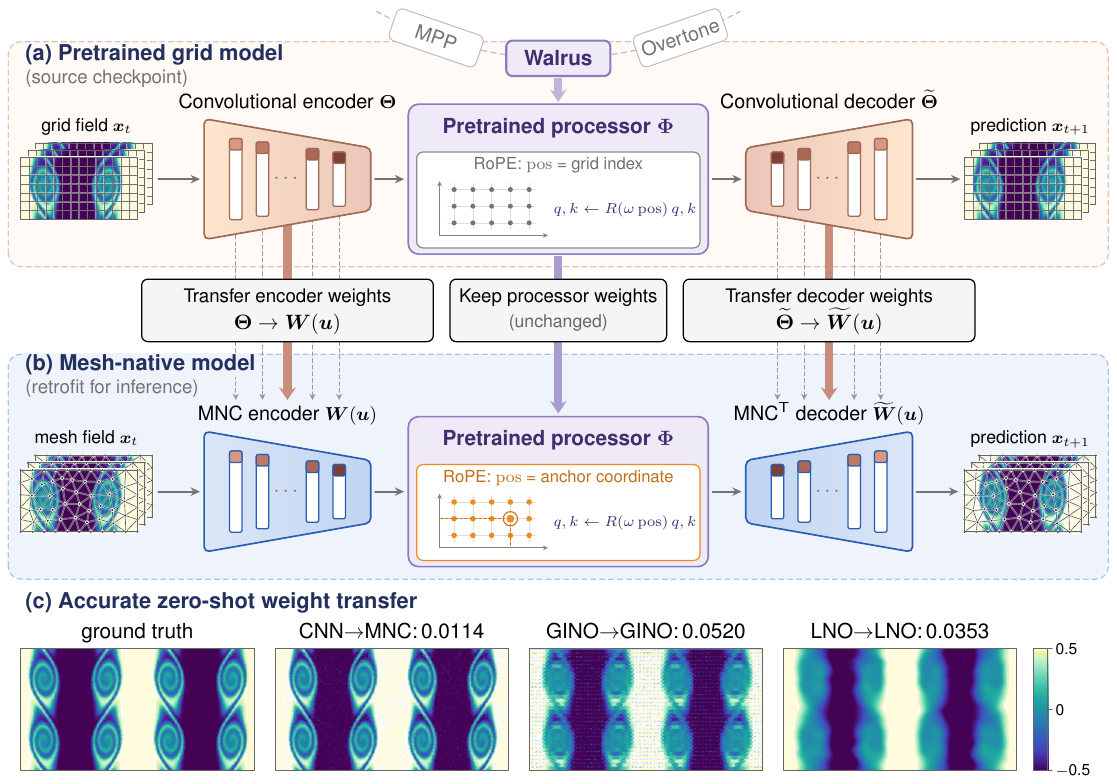}
\caption{\textbf{(a)} MNC loads grid-pretrained weights directly onto a mesh-compatible GNN architecture, for consistent retrofitting of pretrained physics emulators. \textbf{(b)} Zero-shot shear-flow single-step prediction after grid pretraining, including MNC (ours), GINO \citep{li2023gino} and LNO \citep{wang2024lno}. Values are normalised MSE for the displayed trajectories.}
\label{fig:overview}
\end{figure}

\subsection{Background and related work}

\textbf{Physics foundation models and retrofitting.}
The most powerful physics foundation models, such as MPP, Poseidon and Walrus, pretrain transformer surrogates across physical systems, all with convolutional patch embeddings on uniform grids~\citep{mccabe2024multiple,herde2024poseidon,mccabe2026walrus}. Their architectures often build on Vision Transformers (ViTs) \citep{dosovitskiy2021an}. The cost of retraining is prohibitive, and recent work by \citet{diaconu2026probabilistic} has provided initial steps towards retrofitting pretrained learned simulators post hoc with probabilistic predictions. Our approach is to retrofit the model to a new, more versatile data regime.

\looseness=-1
\textbf{Physics models on unstructured meshes.} The need to simulate physical systems on unstructured meshes is ubiquitous
in scientific computing~\citep{brenner2008mathematical,rowbottom2025gadaptivity}. MeshGraphNets~\citep{pfaff2021meshgraphnets} and related GNN simulators operate natively on meshes but encode information differently from ViT-based foundation models. Operator-learning approaches handle general geometries via coordinate transforms~\citep{li2023fourier}, attention over point sets~\citep{li2023transformer,wu2024transolver,luo2025transolver++}, neural fields~\citep{serrano2023operator}, continuous convolutions~\citep{hagnberger2025calmpde} or latent point-cloud encoders~\citep{alkin2024upt}, as do GINO~\citep{li2023gino} and LNO~\citep{wang2024lno}. Our approach is fundamentally different: we seek to encode information similarly to classical CNN architectures, exploiting physical locality as a guiding principle and thus allowing for a simple transition between grid and unstructured data regimes.

\looseness=-1
\textbf{Continuous convolutions and discretisation invariance.}
Several ingredients underlying MNC have appeared separately in prior work. SplineCNN parametrises geometric convolutions using continuous B-spline kernels~\citep{fey2018splinecnn}, while discretisation-invariant networks show how pretrained discrete convolutional filters can be interpolated to continuous kernels and recovered exactly on their native grid~\citep{wang2023discretization}. MNC combines these ideas into a mesh-native, multilevel encoder-decoder architecture designed specifically to retrofit pretrained grid-based physics surrogates, including their positional interface. Its operator interpretation follows the numerical-discretisation viewpoint of~\citet{boulle2024guide,kovachki2023neural}: convolution is treated as the discretisation of a local integral operator, with MNC's occupancy normalisation correcting for the non-uniform sampling induced by irregular meshes. Through this, our work provides the central insight that connects this encoder layer with physical consistency.

\section{Mesh-native tokenisation and weight transfer}
\label{sec:method}
We begin by introducing mesh-native convolution, the central ingredient of our retrofitting approach.

\textbf{Setup.} We assume the physical field is sampled at positions $\bp_j\in\Omega\subset\mathbb R^2$, with features $\bx_j\in\mathbb R^{F_{\rm in}}$. On a grid of spacing $(h_x,h_y)$, a standard convolutional layer with kernel size $k_h\times k_w$ has offsets
\[
\mathcal O=\{-L_x,-L_x+h_x,\ldots,L_x\}\times
\{-L_y,-L_y+h_y,\ldots,L_y\},\qquad
L_x=\tfrac{k_w-1}{2}h_x,\; L_y=\tfrac{k_h-1}{2}h_y.
\]
Writing $\bp_{\hat v}$ for the window centre, a strided convolution can be written as
\begin{equation}
\bx'_{\hat v}=\sum_{\bu\in\mathcal O}\Theta[\bu]\,
\bx_{j(\bp_{\hat v}-\bu)}+\bb,
\label{eq:cnn}
\end{equation}
where $\Theta[\bu]\in\mathbb R^{F_{\rm out}\times F_{\rm in}}$, $j(\bp)$ is the node at $\bp$, and $P=k_hk_w$. MNC follows this form closely.

\begin{figure}[h!]
\centering
\includegraphics[width=\textwidth]{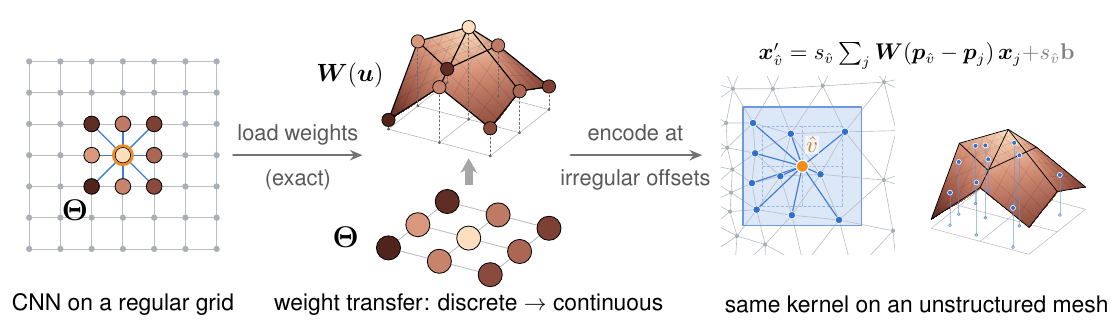}
\caption{Schematic overview of the MNC layer. MNC loads a pretrained CNN's weights into a continuous spline and evaluates the same kernel at physical offsets on an unstructured mesh.}
\label{fig:MNC_overview}
\end{figure}
\textbf{Continuous spline kernel.} Let $\phi(t)=(1-|t|)_+$. Then we can define a continuous map
\begin{equation}
\bW(\bu)=\sum_{\bu'\in\mathcal O^+}\bK[\bu']\,
\phi\!\left(\frac{u_x-u'_x}{h_x}\right)
\phi\!\left(\frac{u_y-u'_y}{h_y}\right),
\label{eq:kernel}
\end{equation}
which provides a continuous extension of the discrete kernel $\bK$ (cf. Fig.~\ref{fig:MNC_overview}). This continuous kernel forms the basis of MNC's gather and scatter operations, which are used to encode and decode information on unstructured meshes. In particular, the kernel weights $\bK$ will become learnable parameters of the MNC layer, which can also be initialised from pretrained CNN weights. To preserve physical weighting we normalise the MNC accumulation using the occupancy profile
\begin{equation}
\omega(\bu)=\sum_{\bu'\in\mathcal O}
\phi\!\left(\frac{u_x-u'_x}{h_x}\right)
\phi\!\left(\frac{u_y-u'_y}{h_y}\right),
\label{eq:occupancy}
\end{equation}\looseness=-1
which takes values in $[0,1]$ and measures whether a node falls within the (graded) support of $\bW$.

\begin{minipage}[c]{0.6\textwidth}
\textbf{Mesh hierarchy.} Much like traditional convolutional layers, we introduce the action of a single MNC layer as a ``gathering'' of information in the sense that it aggregates information from neighbouring physical locations weighted by distance, enforcing locality by virtue of the finite support of the kernel $\bK$. For this we need to assign a physical footprint to each gathering operation. This can be seen as the analogue of the position of a single convolutional window on a grid, and results in a natural definition of a mesh hierarchy of virtual anchors (cf. Fig.~\ref{fig:hierarchy}). Stacking several such interconnected meshes forms the encoder block. Analogously to transposed convolution, the action of the decoder is to distribute information across nodes and, as such, the decoder traverses the levels in reverse.
\end{minipage}\hfill
\begin{minipage}[c]{0.38\textwidth}
\centering\includegraphics[width=0.9\linewidth]{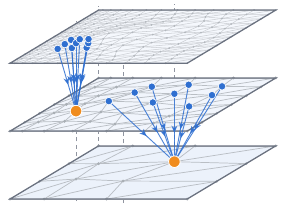}
\captionof{figure}{The mesh hierarchy, finest level at the top. Encoder steps gather onto virtual anchors; decoder steps reuse the same edges in reverse.}
\label{fig:hierarchy}
\end{minipage}

\textbf{The MNC layer and its adjoint.} We are now in a position to define the functional form of a single MNC layer, which joins each anchor $\hat v$ to all nodes in its geometric support (cf. Fig.~\ref{fig:hierarchy}) and computes
\begin{equation}
\bx'_{\hat v}=s_{\hat v}\left[\sum_{j\in\NN(\hat v)}
\bW(\bp_{\hat v}-\bp_j)\bx_j+\bb\right],\qquad
s_{\hat v}=\frac{P}{\max\{\sum_{j\in\NN(\hat v)}\omega(\bp_{\hat v}-\bp_j),1\}},
\label{eq:enc}
\end{equation}
where $P=k_hk_w$. The scaling factor $s_{\hat v}$ ensures that the total occupancy of the window is correctly normalised. The decoder reverses the same edges and uses its own kernel $\widetilde{\bW}$:
\begin{equation}
\bx'_j=t_j\left[\sum_{\hat s:j\in\NN(\hat s)}
\widetilde{\bW}(\bp_j-\bp_{\hat s})\bx_{\hat s}+\widetilde\bb\right],
\quad
t_j=\frac{1}{\max\{\sum_{\hat s:j\in\NN(\hat s)}\omega(\bp_j-\bp_{\hat s}),1\}}.
\label{eq:dec}
\end{equation}

This construction is a continuous analogue of a convolutional layer, and we now show that it is consistent with the discrete convolutional layer \eqref{eq:cnn}. The proof is provided in Appendix~\ref{app:exact}.
\begin{proposition}[Exact grid reduction]
\label{prop:exact}\looseness=-1
Suppose the input nodes $\{\bp_j\}$ are exactly the points of the source grid of spacing $(h_x,h_y)$, the physical kernel spacings equal the grid spacings, and the anchors and padding reproduce the output windows of a unit-dilation \texttt{Conv2d}. Load $\bK[\bu']=\Theta[\bu']$ for $\bu'\in\mathcal O$ and realise padding as zero-valued nodes with their occupancy included. Then \eqref{eq:enc} equals \eqref{eq:cnn} at every anchor node. Similarly, if $\widetilde\bK$ is loaded from the kernel weights of a \texttt{ConvTranspose2d} with unit dilation, stride equal to its kernel size and zero output padding, then \eqref{eq:dec} equals that \texttt{ConvTranspose2d} at every grid node.
\end{proposition}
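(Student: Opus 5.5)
The plan is to reduce everything to one elementary fact about the hat function: $\phi(t)=(1-|t|)_+$ satisfies $\phi(m)=\delta_{m0}$ for every integer $m$. Two consequences follow.

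\emph{Interpolation.} If $\bu$ differs from every $\bu'\in\mathcal O^+$ by an integer multiple of $(h_x,h_y)$, then \eqref{eq:kernel} collapses to a single term. Hence $\bW(\bu)=\bK[\bu]$ for $\bu\in\mathcal O$. I take $\mathcal O^+$ to be $\mathcal O$ enlarged by a ring of zero weights, so $\bW(\bu)=0$ at lattice offsets outside $\mathcal O$.

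\emph{Occupancy.} By the same computation applied to \eqref{eq:occupancy}, $\omega(\bu)=1$ for $\bu\in\mathcal O$ and $\omega(\bu)=0$ at all other lattice offsets.

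The first step is a lattice-alignment lemma. Under the hypotheses that the anchors reproduce the output windows of a unit-dilation \texttt{Conv2d} and that the kernel spacings equal the grid spacings, every difference $\bp_{\hat v}-\bp_j$ between an anchor and a grid or padding node lies on the offset lattice $\mathcal O+h_x\mathbb Z\times h_y\mathbb Z$. This holds for odd kernel sizes, where the anchor sits on a grid point. It also holds for even sizes, where the anchor sits at a cell-centre shift of half a spacing, exactly matching the half-integer offsets in $\mathcal O$. Consequently the geometric neighbourhood $\NN(\hat v)$, i.e.\ the nodes in the open support box $|u_x|<L_x+h_x$, $|u_y|<L_y+h_y$ of $\bW$, meets the lattice exactly in the $P$ nodes $\{\bp_{\hat v}-\bu:\bu\in\mathcal O\}$. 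Padding nodes, which are zero-valued but present, supply whatever boundary windows would otherwise be missing. This makes the map $\bu\mapsto j(\bp_{\hat v}-\bu)$ a bijection from $\mathcal O$ onto $\NN(\hat v)$.

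With this in hand the encoder claim is a direct substitution.
\begin{itemize}
\item \emph{Normalisation.} The occupancy sum over $\NN(\hat v)$ equals $P\ge 1$, so $s_{\hat v}=P/P=1$.
\item \emph{Kernel sum.} Reindexing by $\bu=\bp_{\hat v}-\bp_j$ gives $\sum_{j\in\NN(\hat v)}\bW(\bp_{\hat v}-\bp_j)\bx_j=\sum_{\bu\in\mathcal O}\bK[\bu]\,\bx_{j(\bp_{\hat v}-\bu)}$.
\item \emph{Conclusion.} Loading $\bK[\bu]=\Theta[\bu]$ and keeping the unscaled bias $\bb$ reproduces \eqref{eq:cnn}.
\end{itemize}
Zero-valued padding contributes nothing to the kernel sum but does contribute to the occupancy. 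This is precisely why padding must be realised as nodes with their occupancy included: otherwise $s_{\hat v}$ would exceed $1$ at boundary anchors.

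For the decoder, the key is that stride equal to kernel size with unit dilation and zero output padding makes the transposed-convolution footprints tile the output grid without overlap. Each grid node $j$ therefore belongs to $\NN(\hat s)$ for exactly one anchor $\hat s$, and $\bp_j-\bp_{\hat s}\in\mathcal O$ by the same alignment lemma. Then the occupancy sum is $1$, so $t_j=1$. The sum in \eqref{eq:dec} reduces to the single term $\widetilde\bK[\bp_j-\bp_{\hat s}]\,\bx_{\hat s}+\widetilde\bb$, which is exactly the \texttt{ConvTranspose2d} output at $j$ once its weight is indexed by the offset of $j$ within the block of $\hat s$.

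I expect the main obstacle to be bookkeeping rather than analysis. There are two points to handle. First, I must state precisely how $\NN(\hat v)$ is defined (closed versus open support), so that lattice nodes at distance exactly $L_x+h_x$ carry zero weight and zero occupancy and do not perturb the counts. Second, I must check the index convention of \texttt{ConvTranspose2d}, which scatters $x_{\hat s}$ with weight $w[a,b]$ to output position $\text{stride}\cdot \hat s+(a,b)$. The decoder offset $\bp_j-\bp_{\hat s}$ has the opposite sign to the encoder offset $\bp_{\hat v}-\bp_j$. So the loading $\widetilde\bK[\bu']$ must be taken with the orientation, or flip, that makes $\bu'=\bp_j-\bp_{\hat s}$ correspond to the entry $(a,b)$, with $\bu'$ measured from the block centre. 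I would verify this on a $2\times2$ example before writing the general case.
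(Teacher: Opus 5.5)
Your proposal is correct and follows essentially the same route as the paper. The integer-node property $\phi(m)=\mathbf 1_{\{m=0\}}$ gives $\bW=\bK$ and $\omega=1$ on $\mathcal O$ and zero at every other lattice offset, so a complete zero-padded window yields $s_{\hat v}=1$, and non-overlapping transposed-convolution footprints give $t_j=1$ because neighbouring anchors meet a node only at zero spline values. The index bookkeeping you defer is what the paper writes out: the encoder loads $\Theta_{o,i,r,c}$ at the flipped offset $\bigl(((k_w-1)/2-c)h_x,((k_h-1)/2-r)h_y\bigr)$, and the decoder loads $\Psi_{i,o,r,c}$ at the unflipped offset $\bigl((c-(k_w-1)/2)h_x,(r-(k_h-1)/2)h_y\bigr)$, with the in/out channels transposed.
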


\textbf{Preserving the processor interface.} For transformer-based processors, the tokenisation layer must also provide a positional embedding. In MNC, the source processor encodes position with axial rotary position embedding (RoPE)~\citep{su2021roformer}: on an $n_x\times n_y$ token grid, every query and key channel pair $r$ is rotated by the angle $\omega_{a,r}\,\mathrm{pos}_a$, where $\omega_{a,r}$ are fixed per-axis frequencies and $\mathrm{pos}_a=\mathrm{linspace}(-1,1,n_a)$ is the token's normalised index along axis $a$. The MNC encoder returns a flat list of anchors rather than a grid, so we evaluate the same embedding at the anchor coordinates,
\begin{equation}
\mathrm{pos}_a(\bp)=\frac{2\,(p_a-p_a^{\min})}{p_a^{\max}-p_a^{\min}}-1,\qquad
R_{a,r}(\bp)=
\begin{pmatrix}\cos(\omega_{a,r}\mathrm{pos}_a)&-\sin(\omega_{a,r}\mathrm{pos}_a)\\
\sin(\omega_{a,r}\mathrm{pos}_a)&\cos(\omega_{a,r}\mathrm{pos}_a)\end{pmatrix},
\label{eq:rope}
\end{equation}
with $p_a^{\min},p_a^{\max}$ the outermost anchor coordinates and the source frequencies, channel pairing and temporal embedding unchanged. On a regular grid this construction reduces to standard RoPE, and the processor sees the same angles for the same physical locations. As a result, this spatial encoding lends itself naturally to the retrofitting of pretrained models to unstructured meshes, and we will demonstrate this favourable property in \S\ref{sec:forecasting}. Figure~\ref{fig:rope} in Appendix~\ref{app:exact} illustrates the coordinate form and why the stock axial embedding cannot be applied to a flat list of anchors.
\section{A finite-window operator interpretation for MNC}
\label{sec:theory}
A key feature of MNC is its ability to handle unstructured data. It turns out that, beyond mere compatibility with pretrained CNN kernels, MNC is also invariant under the choice of discretisation of a given field, at least in the limit of fine sampling. This provides additional justification for the use of MNC as a physics encoder architecture, as it is consistent with the underlying continuous operator that the CNN kernel approximates. In this section we formalise this property.

\textbf{Setup.} Let the domain, $\Omega$, be bounded. We fix the spline support $S=[-R_x,R_x]\times[-R_y,R_y]$ and refine the kernel spacing so that $h_x,h_y\to0$. Let $a_h=h_xh_y$, $P_h=k_hk_w$, and choose
\begin{equation}
\bK_h[\bu']=a_h\kappa(\bu'),\qquad
(\mathcal Ku)(\bp)=\int_\Omega\kappa(\bp-y)u(y)\,dy,
\label{eq:continuum-kernel}
\end{equation}
where $\kappa\in W^{1,\infty}(\mathbb R^2;\mathbb R^{F_{\rm out}\times F_{\rm in}})$ is a Lipschitz kernel with $\operatorname{supp}\kappa\subseteq S$  and $u\in C(\overline\Omega;\mathbb R^{F_{\rm in}})$. %Then $\bW_h/a_h$ is its bilinear interpolant, and $P_ha_h\to|S|$.

\begin{proposition}[Continuum limit and discretisation invariance]
\label{prop:continuum}
Suppose the mesh nodes satisfy $c_n\sum_j\delta_{\bp_j^{(n)}}\rightharpoonup\rho(y)\,dy$, with $c_n\downarrow0$ and $\rho\in L^1(\Omega)$ positive almost everywhere. For each fixed anchor whose full window lies inside $\Omega$, MNC with zero bias, $\bb=\mathbf0$, applied to $\bx_j=u(\bp_j^{(n)})$ has the joint limit
\begin{equation}
\operatorname{MNC}_{h,n}u(\bp)\longrightarrow
|S|\,
\frac{\int_\Omega\kappa(\bp-y)u(y)\rho(y)\,dy}
{\int_{\bp-S}\rho(y)\,dy}.
\label{eq:density-limit}
\end{equation}
Thus two equidistributed refinements of the same field, with the same limiting kernel, have asymptotically identical outputs at common anchors (Fig.~\ref{fig:limit}a).
\end{proposition}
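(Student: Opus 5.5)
The plan is to write the MNC output at a fixed anchor $\bp$ as a ratio of two empirical sums and to pass each sum to the limit separately. With $\bb=\mathbf0$ and $\bx_j=u(\bp_j)$, equation \eqref{eq:enc} gives
\[
\operatorname{MNC}_{h,n}u(\bp)=P_h\,\frac{\sum_j \bW_h(\bp-\bp_j)u(\bp_j)}{\max\{\sum_j\omega_h(\bp-\bp_j),1\}}
=\frac{P_ha_h\cdot c_n\sum_j a_h^{-1}\bW_h(\bp-\bp_j)u(\bp_j)}{\max\{c_n\sum_j\omega_h(\bp-\bp_j),c_n\}}.
\]
Multiplying numerator and denominator by $c_n$ lets us use the weak convergence $\mu_n:=c_n\sum_j\delta_{\bp_j}\rightharpoonup\rho\,dy$. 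The prefactor satisfies $P_ha_h\to|S|$, since $P_h=k_hk_w$ with $(k_w-1)h_x=2R_x$ and similarly in $y$.

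\textbf{Step 1 (uniform kernel approximation).} Since $\bK_h=a_h\kappa$ on $\mathcal O$, the function $a_h^{-1}\bW_h$ is the tensor-product bilinear interpolant of $\kappa$ on the offset grid. For Lipschitz $\kappa$ we get $\|a_h^{-1}\bW_h-\kappa\|_\infty\le C\Lip(\kappa)(h_x+h_y)\to0$, where $\operatorname{supp}\kappa\subseteq S$ handles the boundary nodes. In the same way, $\omega_h$ is the interpolant of the constant $1$ on $\mathcal O$. It equals $1$ on $S$ and decays linearly to $0$ on a collar of width $(h_x,h_y)$ around $S$. Hence $\omega_h\to\mathbf 1_S$ pointwise off $\partial S$, with $0\le\omega_h\le1$.

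\textbf{Step 2 (numerator).} Write $\int f\,d\mu_n$ for test functions $f$. We have $\int a_h^{-1}\bW_h(\bp-\cdot)u\,d\mu_n=\int\kappa(\bp-\cdot)u\,d\mu_n+E$, with $|E|\le\|a_h^{-1}\bW_h-\kappa\|_\infty\|u\|_\infty\mu_n(\Omega)$. The total mass $\mu_n(\Omega)$ is bounded because it converges (test with $1$ on the bounded domain $\overline\Omega$), so $E\to0$. The function $\kappa(\bp-\cdot)u$ is continuous and bounded on $\overline\Omega$, so weak convergence gives the limit $\int_\Omega\kappa(\bp-y)u(y)\rho(y)\,dy$. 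This holds jointly in $(h,n)$ with no coupling assumption, because the error bound is uniform in $n$.

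\textbf{Step 3 (denominator, the main obstacle).} The test function $\omega_h(\bp-\cdot)$ converges to the discontinuous indicator $\mathbf 1_{\bp-S}$, so weak convergence cannot be applied directly. I would sandwich it: for $\varepsilon>0$, choose continuous $g_\varepsilon^-\le\mathbf 1_{\bp-S}\le g_\varepsilon^+$ with $g^\pm_\varepsilon$ differing from the indicator only on an $\varepsilon$-collar of $\partial(\bp-S)$. For $h$ small (collar width below $\varepsilon$), we have $g^-_\varepsilon\le\omega_h(\bp-\cdot)$, using that $\omega_h\equiv1$ on $S$ only up to the slack of the monotone cutoff, and $\omega_h(\bp-\cdot)\le g^+_\varepsilon$. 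Weak convergence of $\mu_n$ then bounds the liminf and limsup by $\int g^\pm_\varepsilon\rho$. Since the full window lies in $\Omega$ and $\rho\in L^1$, letting $\varepsilon\to0$ gives the limit $\int_{\bp-S}\rho\,dy$ by dominated convergence, because the rectangle boundary has Lebesgue measure zero.

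This limit is strictly positive because $\rho>0$ a.e. on a set of positive measure. Consequently, for large $n$ we have $c_n\sum_j\omega_h>c_n$, so the $\max$ is inactive. Combining the three limits with $P_ha_h\to|S|$ yields \eqref{eq:density-limit}. The invariance statement follows at once: the right-hand side depends on the discretisation only through $\rho$, so two refinements with the same limiting density give identical limits at common anchors.
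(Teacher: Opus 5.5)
Your proposal is correct and follows essentially the same route as the paper. Both arguments write the output as $P_ha_h$ times a ratio of integrals against $\mu_n$, handle the numerator with the uniform bilinear-interpolation bound plus weak convergence, squeeze the occupancy sum, use positivity of $\int_{\bp-S}\rho\,dy$ to switch off the clipping, and use $P_ha_h\to|S|$.

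The differences are minor. You sandwich $\omega_h(\bp-\cdot)$ between continuous cutoffs rather than indicators of eroded rectangles, which avoids the zero-boundary-measure (Portmanteau) step. You also place the outermost offsets on $\partial S$ instead of the zero ring. That only moves the $O(h)$ occupancy ramp outside $S$, so $\mathbf 1_S\le\omega_h$ becomes the trivial side and the upper bound is the one that needs $h<\varepsilon$.
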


The proof of this result is included in Appendix~\ref{app:continuum}. Essentially, this means that, for sufficiently fine meshes, the output of MNC is independent of the specific mesh used, as long as the mesh is equidistributed and the kernel is appropriately defined. This means that MNC truly provides a physically consistent representation of the underlying continuous process. This is a specific property of our MNC construction and does not hold for arbitrary GNN architectures: in particular, for Graph Convolutional Networks (GCNs)~\citep{kipf2017semisupervised} and their variants, one can easily show that the output of the GCN depends on the local connectivity of the mesh, and thus on the specific triangulation used. This is a key difference between MNC and standard GNNs, and highlights the importance of our construction for retrofitting pretrained models to unstructured meshes.

\begin{figure}[h!]
\centering
\includegraphics[width=\textwidth]{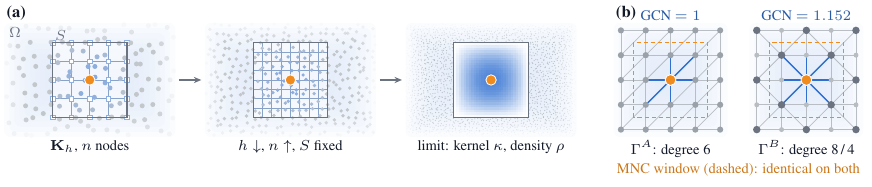}
\caption{\looseness=-1  Discretisation invariance of MNC. \textbf{(a)} Refining kernel spacing and mesh together (Proposition~\ref{prop:continuum}). In the limit MNC corresponds to the continuous kernel $\kappa$ and the mesh enters only through its sampling density $\rho$. \textbf{(b)} Two triangulations of the same vertices. The GCN neighbourhood of the origin is shown in blue. On the constant field the degree-normalised GCN returns $1$ and $1.152$ at the origin~\eqref{eq:gcn-main}, whereas the MNC window (dashed) yields identical outputs.}
\label{fig:limit}
\end{figure}

\textbf{A mesh-edge GNN need not share this property.} Consider the GCN~\citep{kipf2017semisupervised} with symmetric degree normalisation,
$G_\Gamma(\bx)=\operatorname{ReLU}(\widetilde D^{-1/2}(A+I)\widetilde D^{-1/2}\bx)$,
with scalar weight one. There are two shape-regular triangulations $\Gamma^A_\eta,\Gamma^B_\eta$ of the \emph{same} square-grid vertices of mesh width $\eta$ for which the constant field $u\equiv1$ produces, at their common origin,
\begin{equation}
G_{\Gamma^A_\eta}(\mathbf1)_0=1,
\qquad G_{\Gamma^B_\eta}(\mathbf1)_0=\frac59+\frac{4}{3\sqrt5}=1.15184\ldots
\label{eq:gcn-main}
\end{equation}
for every $\eta=1/n$ with $n\geq3$: local degrees are six in one triangulation and eight/four in the other (see Appendix~\ref{app:gcn}). MNC agrees exactly on this pair (Fig.~\ref{fig:limit}b). Symmetric normalisation does not preserve constants on irregular graphs. Mean aggregation $\widetilde D^{-1}(A+I)$ would return $1$ on both meshes, but any fixed number of hops has a footprint of size $O(\eta)$, so such a layer tends to the pointwise value $u(\bp)$ rather than to a finite-window operator.
\section{Experiments}
\label{sec:experiments}
\label{sec:forecasting}
\looseness=-1
We now test MNC in retrofitting experiments of physics emulators for dynamical problems against four mesh-native architectures that span the state of the art in encoder families for unstructured data.

\textbf{Setup.} For this evaluation we use three datasets from The Well~\citep{ohana2024well}, the two-dimensional turbulent radiative layer (TRL2D, $128\times384$), shear flow ($256\times512$) and active matter, and the MeshGraphNets cylinder-flow benchmark~\citep{pfaff2021meshgraphnets}, whose trajectories are simulated on per-trajectory unstructured meshes of about 1.9k nodes around an obstacle and have no grid representation. For the Well datasets the grid is the pretraining domain and the target is an irregular Delaunay triangulation of the same domain at matched resolution. Each dataset comes as a sequence of time-trajectories and trained models are evaluated on held-out test trajectories. Full details of the datasets and meshes used are provided in Appendix~\ref{app:datasets}.

The models are asked to predict the next step of the physical field given six input frames, and are evaluated on their ability to roll out predictions over ten steps. We report variance-normalised RMSE (VRMSE) on the physical fields, node-area weighted and averaged over non-constant channels, for the next step and averaged over a ten-step autoregressive rollout; mean~$\pm$ standard deviation is over three seeds. Full details of the training and evaluation protocol are provided in Appendix~\ref{app:protocol}.

\textbf{MNC retrofit.} We consider the MNC retrofit of two types of physics emulators: (i) in Experiments 1 and 2 (\S\ref{sec:zeroshot}-\S\ref{sec:finetune}) we retrofit a grid-pretrained Overtone-family model~\citep{mukhopadhyay2026overtone} and test on an irregular triangulation of its own dataset. (ii) in Experiment 3 (\S\ref{sec:multi-physics}) we retrofit a released multi-physics foundation model~\citep{mccabe2026walrus} and apply it to the MeshGraphNets cylinder-flow benchmark. Full model specifications can be found in Appendix~\ref{app:models}. In the following we note that for the MNC retrofitted models \textit{any pretraining is always done on grid-based data with a standard CNN encoder/decoder pair} whose weights are then directly loaded onto the MNC encoder/decoder pair, and the processor is left unchanged. \textit{Only fine-tuning is performed using the MNC encoder/decoder pair} on the target mesh data.

\textbf{Baselines.} We compare against four state-of-the-art baseline physics emulators that are able to treat unstructured data. The important difference is that our MNC layer allows for retrofitting and grid pretraining while the baselines inherently only treat mesh data, gaining no benefit from a grid-based data format. Our baselines are the neural operators GINO~\citep{li2023gino} and LNO~\citep{wang2024lno}, whose encoders act on point sets, the point-cloud transformer Transolver++~\citep{luo2025transolver++}, and the mesh-edge graph network MeshGraphNets~\citep{pfaff2021meshgraphnets}. Where available, we also performed a processor-free comparison of the tokenisers alone, which is included in Appendix~\ref{sec:tokeniser-experiments} and shows the favourable transfer properties of the MNC layer even in the absence of a processor.

\subsection{Experiment 1: zero-shot transfer onto an irregular mesh}
\label{sec:zeroshot}
\looseness=-1
Our first experiment evaluates the zero-shot transfer of grid-pretrained models to irregular mesh representations. The results are shown in Table~\ref{tab:exp1-zeroshot}, which evaluates each grid-pretrained model on an irregular mesh representation of the unseen test dataset with similar physical properties (cf.\ Appendix~\ref{app:datasets}). The Overtone+MNC model was pretrained using a CNN encoder/decoder pair on the grid representation of the dataset (using the model \citep{mukhopadhyay2026overtone}). These pretrained weights were loaded onto the MNC encoder/decoder pair for evaluation on the irregular mesh. The baselines were trained from scratch on the grid mesh representation of the dataset. The results show that the MNC retrofit retains the most pretrained skill on all three datasets and under both metrics, outperforming all baselines in zero-shot transfer. An example of the zero-shot transfer for shear flow is shown in Fig.~\ref{fig:overview}b, and the error fields in next-step prediction for TRL2D are shown in Fig.~\ref{fig:trl2d-zeroshot-onestep}. Here the fainter error field of the Overtone+MNC model means that it retains more of the pretrained skill.

\begin{table}[h!]
  \centering
  \caption{Zero-shot transfer onto the irregular mesh (Experiment 1): next-step and ten-step rollout
  VRMSE, in units of $10^{-2}$. Mean $\pm$ standard deviation over three
  pretraining seeds.}
  \label{tab:exp1-zeroshot}
  \small
  \setlength{\tabcolsep}{4pt}
  \begin{tabular}{l>{\columncolor{gray!12}}ccccc}
    \toprule
    & Overtone+MNC & GINO & LNO & Transolver++ & MeshGraphNets \\
           \multicolumn{1}{l}{\emph{Next-step VRMSE}} &&\\
    TRL2D         & {\boldmath$26.25 \pm 0.45$} & $29.55 \pm 0.10$ & $48.78 \pm 3.67$ & $40.46 \pm 0.28$ & $45.07 \pm 0.99$ \\
    Active matter & {\boldmath$4.88 \pm 0.12$} & $11.54 \pm 0.20$ & $8.00 \pm 0.45$ & $12.44 \pm 0.07$ & $19.58 \pm 0.18$ \\
    Shear flow    & $1.42 \pm 0.03$ & $1.82 \pm 0.00$ & $3.85 \pm 0.04$ & {\boldmath$1.04 \pm 0.22$} & $2.55 \pm 0.02$ \\
    \midrule
    \multicolumn{1}{l}{\emph{10-step rollout VRMSE}}\!\!\!\! &&\\
    TRL2D         & {\boldmath$52.27 \pm 0.98$} & $63.64 \pm 0.85$ & $85.03 \pm 7.48$ & $79.93 \pm 1.90$ & $85.04 \pm 2.62$ \\
    Active matter & {\boldmath$26.05 \pm 0.61$} & $73.59 \pm 4.04$ & $54.04 \pm 1.73$ & $84.74 \pm 0.94$ & $113.84 \pm 10.00$ \\
    Shear flow    & {\boldmath$8.19 \pm 0.53$} & $9.80 \pm 0.39$ & $16.48 \pm 0.35$ & $11.51 \pm 1.14$ & $19.88 \pm 1.48$ \\
    \bottomrule
  \end{tabular}
\end{table}

\begin{figure}[h!]
  \centering
  \newcommand{\trlpan}[2]{\begin{subfigure}[t]{0.32\linewidth}\centering\includegraphics[width=\linewidth]{#1}\caption*{\scriptsize #2}\end{subfigure}}
  \begin{minipage}[c]{0.87\textwidth}
    \trlpan{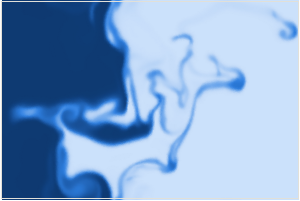}{Ground truth}\hfill
    \trlpan{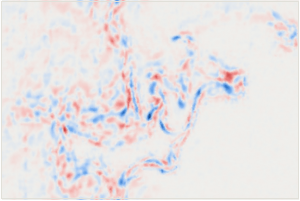}{Overtone+MNC (ours), 0.072}\hfill
    \trlpan{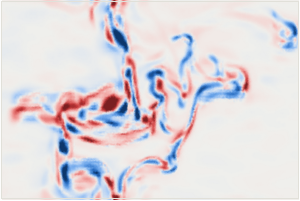}{GINO, 0.155}\\[2pt]
    \trlpan{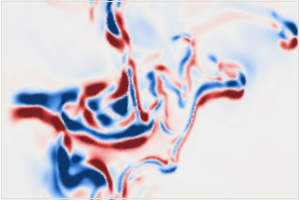}{LNO, 0.271}\hfill
    \trlpan{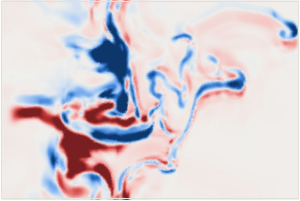}{Transolver++, 0.278}\hfill
    \trlpan{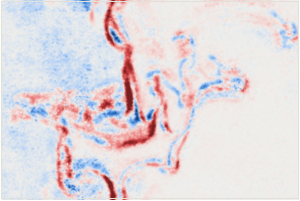}{MeshGraphNets, 0.148}
  \end{minipage}\hfill
  \begin{minipage}[c]{0.12\textwidth}\centering
    \includegraphics[height=2.0in]{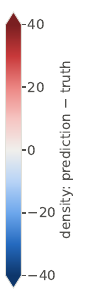}
  \end{minipage}
  \caption{Zero-shot one-step density error on the irregular TRL2D mesh.  Numbers are the window's density VRMSE; predicting no change (persistence) scores 0.308.}
  \label{fig:trl2d-zeroshot-onestep}
\end{figure}

\vspace{-0.25cm}

\subsection{Experiment 2: Fine-tuning on irregular meshes}\label{sec:finetune}

While zero-shot performance provides a useful baseline for judging the effectiveness of weight transfer onto the MNC tokeniser, it turns out that the mesh-native tokenisation also provides further advantages for fine-tuning models on unstructured meshes. In this experiment we compare the performance of the MNC retrofit against the baselines when fine-tuned on the irregular mesh, for 200 epochs (Appendix~\ref{app:protocol}) on each of the three Well datasets. The results for next-step VRMSE on the irregular TRL2D are shown in Fig.~\ref{fig:exp1-trl2d-finetune}, with full results for each dataset provided in Table~\ref{tab:exp1-finetuned}. The MNC retrofit leads to extremely fast fine-tuning: it reaches its plateau within ten epochs, while the same MNC-tokenised transformer trained from scratch needs about a hundred epochs to arrive at the same level. The grid-pretrained baselines start below their scratch counterparts but improve little (MeshGraphNets the most, levelling off well above the retrofit), indicating that their pretraining insights do not transfer as easily onto unstructured meshes. The MNC-tokenised transformer ends below every baseline whether or not it was pretrained, showing that the retrofit inherits a strong architecture as well as a strong initialisation.

Table~\ref{tab:exp1-finetuned} shows that this observation holds across all three datasets. After fine-tuning, the MNC retrofit achieves the lowest next-step and rollout VRMSE on TRL2D, active matter and shear flow, with errors a factor of two to three below the best baseline on active matter and shear flow. The benefit of grid pretraining is most pronounced on these two datasets, where it reduces the next-step error of the MNC model by about a quarter and a third, respectively, compared to training from scratch, while on TRL2D the pretrained and scratch models end up at similar levels.

\begin{figure}[h!]
  \centering
  \includegraphics[width=0.79\textwidth]{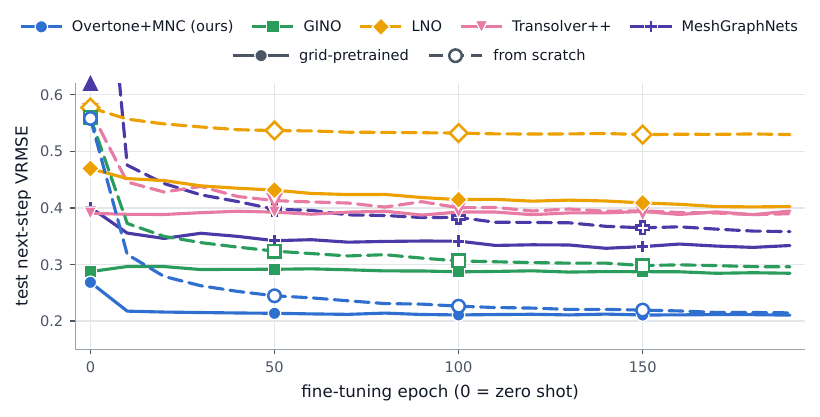}
  \caption{Test next-step VRMSE (mean) on the irregular TRL2D mesh against fine-tuning epoch. Grid-pretrained models are marked in solid lines and randomly initialised ones in dashed lines.}
  \label{fig:exp1-trl2d-finetune}
\end{figure}

\begin{table}[h!]
  \centering
  \caption{Fine-tuning on the irregular mesh: next-step and ten-step rollout VRMSE, in
  units of $10^{-2}$.}
  \label{tab:exp1-finetuned}
  \small
  \setlength{\tabcolsep}{2.5pt}
  \begin{tabular}{ll>{\columncolor{gray!12}}ccccc}
    \toprule
    & & Overtone+MNC & GINO & LNO & Transolver++ & MeshGraphNets \\
    \midrule
    \multicolumn{2}{l}{\emph{Next-step VRMSE}} &&&&&\\
    TRL2D         & pretrained & {\boldmath$22.10 \pm 0.04$} & $29.40 \pm 0.08$ & $40.89 \pm 0.12$ & $40.76 \pm 0.07$ & $33.91 \pm 0.43$ \\
                  & scratch    & $22.77 \pm 0.20$ & $29.78 \pm 0.27$ & $53.39 \pm 0.11$ & $39.78 \pm 0.13$ & $36.84 \pm 0.46$ \\
    \addlinespace
    Active matter & pretrained & {\boldmath$3.30 \pm 0.06$} & $10.71 \pm 0.04$ & $8.12 \pm 0.05$ & $12.81 \pm 0.04$ & $11.20 \pm 0.26$ \\
                  & scratch    & $4.36 \pm 0.08$ & $12.84 \pm 0.20$ & $15.86 \pm 1.27$ & $14.09 \pm 0.08$ & $13.35 \pm 0.03$ \\
    \addlinespace
    Shear flow    & pretrained & {\boldmath$0.43 \pm 0.01$} & $1.73 \pm 0.02$ & $3.56 \pm 0.12$ & $0.98 \pm 0.05$ & $1.78 \pm 0.05$ \\
                  & scratch    & $0.66 \pm 0.02$ & $2.12 \pm 0.01$ & $3.89 \pm 0.00$ & $1.34 \pm 0.05$ & $1.84 \pm 0.08$ \\
    \midrule
    \multicolumn{2}{l}{\emph{10-step rollout VRMSE}} &&&&&\\
    TRL2D         & pretrained & {\boldmath$49.04 \pm 1.61$} & $59.23 \pm 2.24$ & $68.29 \pm 1.38$ & $71.38 \pm 2.63$ & $75.11 \pm 2.39$ \\
                  & scratch    & $50.60 \pm 4.03$ & $64.14 \pm 1.62$ & $79.32 \pm 1.41$ & $69.63 \pm 1.08$ & $75.09 \pm 5.43$ \\
    \addlinespace
    Active matter & pretrained & {\boldmath$22.79 \pm 1.83$} & $63.71 \pm 1.81$ & $56.51 \pm 1.67$ & $91.08 \pm 2.33$ & $107.97 \pm 9.89$ \\
                  & scratch    & $29.73 \pm 1.62$ & $85.80 \pm 15.30$ & $81.59 \pm 2.52$ & $91.12 \pm 2.15$ & $102.29 \pm 11.67$ \\
    \addlinespace
    Shear flow    & pretrained & {\boldmath$2.65 \pm 0.16$} & $8.86 \pm 0.52$ & $17.32 \pm 3.18$ & $10.16 \pm 0.97$ & $30.27 \pm 9.40$ \\
                  & scratch    & $4.04 \pm 0.45$ & $12.05 \pm 2.36$ & $16.72 \pm 0.00$ & $13.07 \pm 1.21$ & $22.40 \pm 5.22$ \\
    \bottomrule
  \end{tabular}
\end{table}

\subsection{Experiment 3: Zero-shot retrofitting of a large-scale foundation model}\label{sec:multi-physics}
Our final experiment confirms that this retrofitting approach is not limited to small-scale models and datasets. Indeed, the MNC retrofit can provide an efficient bridge between large-scale grid-pretrained foundation models and unstructured mesh problems. In this experiment we retrofit Walrus~\citep{mccabe2026walrus}, a large-scale multi-physics foundation model pretrained on a multi-physics corpus, to the MeshGraphNets cylinder-flow benchmark. A particularly noteworthy aspect of this experiment is that the cylinder-flow dataset is not included in Walrus's pretraining corpus, so the model has no prior knowledge of this flow configuration. The MNC retrofit allows us to leverage the pretrained knowledge of Walrus and apply it to a new problem domain without requiring any additional training on the cylinder-flow dataset.
\looseness=-1
In Table~\ref{tab:walrus-cyl-vrmse} we report the next-step and ten-step rollout VRMSE for the retrofitted Walrus model, both in zero-shot and after fine-tuning on the cylinder-flow training split. We also include the performance of the baselines trained from scratch on the cylinder-flow dataset for comparison. In each of the fine-tuning/training configurations we allowed the same compute budget of 7.5h on a single NVIDIA GH200 GPU. The results show that (i) zero-shot, the retrofitted Walrus model outperforms all trained baselines except Transolver++, meaning that grid-pretrained knowledge is effectively transferred to the unstructured mesh problem even for unseen datasets, and (ii) the retrofitted Walrus model achieves the lowest next-step error after fine-tuning, outperforming all baselines by a significant margin.

\begin{table}[h!]
  \centering
  \caption{Cylinder flow dataset: next-step and ten-step rollout VRMSE, in units of $10^{-2}$. All training and fine-tuning runs were given the same compute budget of 7.5h on a single NVIDIA GH200 GPU.}
  \label{tab:walrus-cyl-vrmse}
  \small
  \begin{tabular}{lcc}
    \toprule
    & Next-step VRMSE & 10-step rollout VRMSE \\
    \midrule
    \rowcolor{gray!12} Walrus+MNC, pre-trained, fine-tuned  & {\boldmath$0.15 \pm 0.01$} & {\boldmath$1.02 \pm 0.00$} \\
    \rowcolor{gray!12} Walrus+MNC, pre-trained, zero-shot & $0.41$ & $3.30$ \\
    Walrus+MNC, random init, fine-tuned  & $0.44 \pm 0.01$ & $2.47 \pm 0.04$ \\
    \midrule
    MeshGraphNets, trained on cylinder flow  & $0.80 \pm 0.04$ & $8.26 \pm 0.03$ \\
    Transolver++, trained on cylinder flow  & $0.18 \pm 0.01$ & $2.05 \pm 0.16$ \\
    GINO, trained on cylinder flow  & $0.70 \pm 0.01$ & $3.80 \pm 0.21$ \\
    LNO, trained on cylinder flow  & $2.26 \pm 0.00$ & $12.10 \pm 0.00$ \\
    Persistence & $2.34$ & $12.50$ \\
    \bottomrule
  \end{tabular}
\end{table}\vspace{-0.15cm}

\looseness=-1
Figure~\ref{fig:cylinder-onestep} shows a one-step prediction of the $x$-velocity on a cylinder-flow test trajectory. Without any training on cylinder flow, the retrofitted Walrus predicts the vortex street with an error below all baseline models. Walrus was only pretrained on grid data and weights from its CNN tokeniser were loaded onto MNC. By contrast, each of the baselines was trained on the dataset with 7.5h of compute budget, demonstrating the advantage of transferring grid-based models onto unstructured datasets.
\begin{figure}[h!]
  \centering
  \newcommand{\cylpan}[2]{\begin{subfigure}[t]{0.49\linewidth}\centering\includegraphics[width=\linewidth]{#1}\caption*{\scriptsize #2}\end{subfigure}}
  \begin{minipage}[c]{0.87\textwidth}
    \cylpan{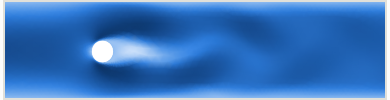}{Ground truth}\hfill
    \cylpan{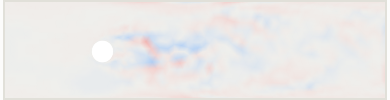}{Walrus+MNC, zero-shot, 0.0058}\\[2pt]
    \cylpan{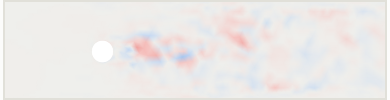}{Transolver++, 0.0052}\hfill
    \cylpan{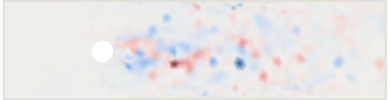}{GINO, 0.0096}\\[2pt]
    \cylpan{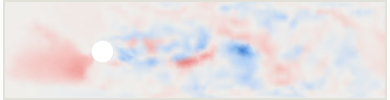}{MeshGraphNets, 0.0110}\hfill
    \cylpan{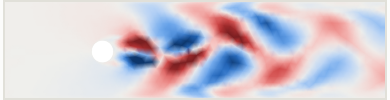}{LNO, 0.0314}
  \end{minipage}\hfill
  \begin{minipage}[c]{0.12\textwidth}\centering
    \includegraphics[height=2.0in]{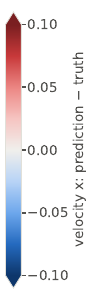}
  \end{minipage}
  \caption{Error in next-step $x$-velocity prediction on cylinder flow. Numbers are the window's $x$-velocity VRMSE; predicting no change (persistence) scores 0.0314.}
  \label{fig:cylinder-onestep}
\end{figure}%\vspace{-0.75cm}

\section{Discussion and conclusion}
\label{sec:discussion}

In this work we introduced mesh-native tokenisation as a means for retrofitting grid-pretrained physics emulators to unstructured meshes. The MNC layer is a simple locality-aware encoder that allows the transfer of pretrained weights from a CNN encoder/decoder pair to a mesh-native tokeniser, while retaining the source model's processor. We demonstrated that this approach allows for zero-shot transfer of grid-pretrained models onto irregular meshes, and that fine-tuning on the target mesh can be done efficiently and effectively. We also showed that this approach can be applied to large-scale foundation models, allowing for the transfer of knowledge from grid-based pretraining to unstructured mesh problems. This allows for significant computational savings through both repurposing of pretrained weights and reduced training time of processors by leveraging fast CNN-based pretraining~\citep{chetlur2014cudnnefficientprimitivesdeep}.

\textbf{Limitations and future work.} MNC retrofitting mainly improves sample efficiency, leading to quicker fine-tuning runs rather than overall improved accuracy after sufficient training. This suggests that the benefits of MNC are most pronounced in scenarios with limited training data or computational resources, and future work can focus on improving the performance of the retrofitted models in these scenarios, for example by introducing geometric quadrature weights for consistency under varying sampling density (Proposition~\ref{prop:continuum}). More broadly, since MNC acts on the meshes used by classical solvers, retrofitted emulators could serve as initial guesses for implicit time stepping~\citep{jin2024hybrid}, and symmetries of the underlying physics could be built into the kernel, as in equivariant reduced-order models~\citep{maierhofer2026essm}.

\subsection*{AI use statement}

In this work, we used generative AI tools to help formulate mathematical claims, to provide ingredients for and assist in writing the proofs of these claims, to provide feedback on the design of the experiments, to implement methods, and to clean and reformat datasets (in particular the conversion of grid and mesh data into a common format). We have not used generative AI tools to generate synthetic data sets, develop theoretical models or conceptual frameworks, propose or refine hypotheses, or interpret results, and translation and qualitative or thematic data analysis are not applicable to this work. Additionally, we used generative AI tools to create and edit software code, create and modify schematic figures, suggest experimental parameters, and draft parts of the paper. We have reviewed all AI-assisted work: all mathematical claims and proofs were checked line by line by the authors; AI-assisted code was reviewed and tested by the authors, including exactness checks of the weight transfer against the source models; all AI-drafted text was reviewed and edited by the authors, with every reported number checked against the underlying results; and all figures were checked against the underlying data. We take responsibility for the final content of this work, including text, claims or artifacts produced with the aid of generative AI.

\subsection*{Ethics statement}

This work uses only publicly available simulation data of physical systems (The Well and the MeshGraphNets cylinder-flow benchmark) and a publicly released pretrained model. No human subjects, personal data or other sensitive information were involved at any stage. All authors have read and adhere to the ICLR Code of Ethics, and we are not aware of any ethical concerns or potential for harm specific to this work beyond those of general-purpose scientific machine learning.

\subsection*{Reproducibility statement}

All data and pretrained weights used in this work are publicly available. The TRL2D, shear-flow and active-matter datasets are part of The Well~\citep{ohana2024well}, and the cylinder-flow benchmark is taken from the official MeshGraphNets~\citep{pfaff2021meshgraphnets} release. Appendix~\ref{app:datasets} provides details on the dataset splits, the construction of the irregular meshes and the resampling of the grid fields. The Walrus checkpoint retrofitted in Experiment~3 is the publicly released one~\citep{mccabe2026walrus}, used without modification of its weights. The MNC layer is specified in \S\ref{sec:method}, and complete proofs of Propositions~\ref{prop:exact} and~\ref{prop:continuum} and of the GCN counterexample are given in Appendix~\ref{app:proofs}. Appendix~\ref{app:models} lists every architecture with its reference implementation, configuration and parameter count, together with the training settings, and Appendix~\ref{app:protocol} gives the training and evaluation protocol, including the metrics, seeds and rollout procedure. After the review process we will release the complete code on GitHub, together with the trained model weights and the irregular meshes on Hugging Face.

\subsection*{Acknowledgements}

The authors gratefully acknowledge the use of resources provided by the Isambard-AI National AI Research Resource (AIRR) under project 0261-4743-5166-1 and 0261-5199-2243-1. Miles Cranmer acknowledges support from the Schmidt Sciences AI2050 Early Career Fellowship and the Isaac Newton Trust. Dr.
Mukhopadhyay thanks the Infosys-Cambridge AI center for support.

\bibliography{iclr2027_conference}
\bibliographystyle{iclr2027_conference}

\appendix
\include{appendix}

\end{document}

%% file: math_commands.tex
\usepackage{amsmath,amsfonts,bm}

\def\1{\bm{1}}

\DeclareMathAlphabet{\mathsfit}{\encodingdefault}{\sfdefault}{m}{sl}
\SetMathAlphabet{\mathsfit}{bold}{\encodingdefault}{\sfdefault}{bx}{n}

%% file: appendix.tex
\section{Transfer identities and consistency proofs}
\label{app:proofs}
\subsection{Proof of Proposition~\ref{prop:exact}}
\label{app:exact}
For completeness we repeat the statement of the Proposition:

\begingroup
\renewcommand{\theproposition}{\ref{prop:exact}}
\begin{proposition}[Exact grid reduction]
Suppose the input nodes $\{\bp_j\}$ are exactly the points of the source grid of spacing $(h_x,h_y)$, the physical kernel spacings equal the grid spacings, and the anchors and padding reproduce the output windows of a unit-dilation \texttt{Conv2d}. Load $\bK[\bu']=\Theta[\bu']$ for $\bu'\in\mathcal O$ and realise padding as zero-valued nodes with their occupancy included. Then \eqref{eq:enc} equals \eqref{eq:cnn} at every anchor node. Similarly, if $\widetilde\bK$ is loaded from the kernel weights of a \texttt{ConvTranspose2d} with unit dilation, stride equal to its kernel size and zero output padding, then \eqref{eq:dec} equals that \texttt{ConvTranspose2d} at every grid node.
\end{proposition}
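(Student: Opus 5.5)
The argument is a direct computation that rests on one fact about the hat function: $\phi(t)=(1-|t|)_+$ satisfies $\phi(0)=1$ and $\phi(m)=0$ for every nonzero integer $m$. It has three steps.

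\emph{Step 1: the kernel interpolates $\Theta$ on the grid.} Since the kernel spacings equal the grid spacings, every difference $\bp_{\hat v}-\bp_j$ between an anchor and an input node (real or padding) is an integer combination $(m_xh_x,m_yh_y)$. For such an offset $\bu$, each term of \eqref{eq:kernel} contains the factor $\phi(m_x-m'_x)\,\phi(m_y-m'_y)$, which equals $\delta_{\bu,\bu'}$. Hence
\[
\bW(\bu)=\begin{cases}\Theta[\bu], & \bu\in\mathcal O,\\ 0, & \text{otherwise},\end{cases}
\]
where we assume $\mathcal O^+$ contains $\mathcal O$ and that $\bK$ vanishes on $\mathcal O^+\setminus\mathcal O$ (zero-padded ghost knots). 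By the same computation, $\omega(\bu)=\mathbf 1[\bu\in\mathcal O]$ from \eqref{eq:occupancy}.

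\emph{Step 2: encoder reduction.} Because $\omega$ vanishes off $\mathcal O$, only nodes with $\bp_{\hat v}-\bp_j\in\mathcal O$ contribute to the sum in \eqref{eq:enc}. The anchors and padding are assumed to reproduce the \texttt{Conv2d} windows, with padding realised as zero-valued nodes whose occupancy is counted. Every offset $\bu\in\mathcal O$ is therefore matched by exactly one node $j(\bp_{\hat v}-\bu)$, and with unit dilation distinct offsets give distinct nodes. Consequently the occupancy sum is exactly $|\mathcal O|=P$, so $s_{\hat v}=P/\max\{P,1\}=1$. The weighted sum becomes $\sum_{\bu\in\mathcal O}\Theta[\bu]\,\bx_{j(\bp_{\hat v}-\bu)}$, where padded entries contribute zero exactly as in \texttt{Conv2d}. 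Adding the bias $\bb$ then reproduces \eqref{eq:cnn}.

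\emph{Step 3: decoder reduction.} When the stride equals the kernel size and dilation is one, the footprints of the transposed convolution tile the output grid without overlap. Each grid node $j$ therefore lies in exactly one footprint, that of a single coarse anchor $\hat s$, at an offset $\bp_j-\bp_{\hat s}\in\mathcal O$. The \texttt{ConvTranspose2d} output at $j$ is then $\widetilde\Theta[\text{that offset}]\,\bx_{\hat s}+\widetilde\bb$; we choose the sign and flip convention of the offset so that this matches the loading of $\widetilde\bK$. By Step 1, the sum over $\hat s$ in \eqref{eq:dec} collapses to this single term. The occupancy sum equals $1$, so $t_j=1$, and zero output padding ensures that no grid node is left uncovered.

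The main obstacle is bookkeeping rather than analysis, and it lies in Step 2 (and the matching part of Step 3). We must verify that the anchor placement, the padding nodes and the index/sign convention of $j(\bp_{\hat v}-\bu)$ agree exactly with PyTorch's cross-correlation convention. The same holds for the \texttt{ConvTranspose2d} index map $o=s\,i+k$ with $s=k_{\rm size}$, which must reproduce the offset $\bp_j-\bp_{\hat s}$ used in $\widetilde\bW$. We must also confirm that no stray node at offset distance $\geq 1$ grid step but inside the support of $\phi$ can occur; this is ruled out because all nodes lie on the lattice. I would handle the convention first by fixing coordinates with $\bp_{\hat v}$ at the window centre for odd kernels, and treating even kernels by a half-step shift of the anchor. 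The interpolation identity of Step 1 is unaffected by such a shift.
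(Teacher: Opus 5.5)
Your proposal is correct and follows the paper's proof essentially step for step. The cardinal property $\phi(m)=\mathbf 1_{\{m=0\}}$ gives $\bW=\bK$ and $\omega=1$ on $\mathcal O$ and zero at every other lattice offset, so a complete padded window gives $s_{\hat v}=1$ and non-overlapping transposed-convolution footprints give $t_j=1$. The only bookkeeping you defer, which the paper writes out, is the explicit tensor-to-offset loading: $\bu_{r,c}=\bigl(((k_w-1)/2-c)h_x,\;((k_h-1)/2-r)h_y\bigr)$ for \texttt{Conv2d} and $\bv_{r,c}=\bigl((c-(k_w-1)/2)h_x,\;(r-(k_h-1)/2)h_y\bigr)$ for \texttt{ConvTranspose2d}, with even kernels giving half-integer offsets whose differences are still multiples of $(h_x,h_y)$.
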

\endgroup

\begin{proof}
The spline functions $\phi$ used in \eqref{eq:kernel} satisfy $\phi(m)=\mathbf1_{\{m=0\}}$ for integer values $m$. At each stencil offset $\bu\in\mathcal O$, therefore, $\bW(\bu)=\bK[\bu]$ and $\omega(\bu)=1$. Every other grid offset has zero kernel and occupancy $(\bW,\omega(\bu))=(0,0)$. A complete encoder window has $P=k_hk_w$ such offsets, so $s_{\hat v}=1$ and \eqref{eq:enc} is exactly \eqref{eq:cnn}. We note in the case of padding for the convolution, correct extension of the underlying mesh is required to ensure that the encoder window is complete. In particular, the encoder must be able to read the same number of samples as the source convolution window and, for zero padding nodes the features are zero but their occupancy still counts.

In particular, let $\Theta_{o,i,r,c}$ be a tensor stored by \texttt{Conv2d}, with $0\leq r<k_h$ and $0\leq c<k_w$. The offset loaded into the encoder is
\begin{equation}
\bu_{r,c}=\bigl(((k_w-1)/2-c)h_x,\;((k_h-1)/2-r)h_y\bigr),
\qquad \bK[\bu_{r,c}]_{o,i}=\Theta_{o,i,r,c}.
\label{eq:conv-indexing}
\end{equation}
Indeed, $\bp_{\hat v}-\bu_{r,c}$ is the input location read by the stored cross-correlation kernel. Even kernel sizes simply give half-integer offsets relative to the window centre and the differences between offsets remain integer multiples of $h_x,h_y$. Thus showing that \eqref{eq:enc} equals \eqref{eq:cnn} at every anchor node.

For \texttt{ConvTranspose2d}, a stored tensor $\Psi_{i,o,r,c}$ scatters from an anchor to the offset
\begin{equation}
\bv_{r,c}=\bigl((c-(k_w-1)/2)h_x,\;(r-(k_h-1)/2)h_y\bigr),
\qquad \widetilde{\bK}[\bv_{r,c}]_{o,i}=\Psi_{i,o,r,c}.
\label{eq:transpose-indexing}
\end{equation}
Here $\bv$ denotes a spatial offset. With unit dilation, stride equal to kernel size and zero output padding, every retained grid output belongs to exactly one discrete footprint. Other anchors can meet it only at zero spline values. Hence its occupancy is one, $t_j=1$, and \eqref{eq:dec} recovers the transposed convolution, including its bias. Output cropping and padding must match the source convention as before, and once this is guaranteed the proof shows that indeed \eqref{eq:dec} equals \texttt{ConvTranspose2d} at every grid node.
\end{proof}

\subsection{Multi-level hierarchies and the processor interface}
\label{app:multilevel}
For a hierarchy, apply the identities of Proposition~\ref{prop:exact} at each level with the corresponding coarse-grid spacing. Preserve the pointwise projections, activations, feature normalisation and token order. In \eqref{eq:rope}, the source token centres $p_a=p_a^{\min}+mH_a$, $m=0,\dots,n_a-1$, with $H_a$ the source token spacing, give $\mathrm{pos}_a=2m/(n_a-1)-1$, the $\mathrm{linspace}(-1,1,n_a)$ values of the source implementation, and hence exactly the source rotations (Figure~\ref{fig:rope}). The deterministic processor therefore receives and returns the same tensors. This proves equality of the complete grid models. Induction gives equality of autoregressive rollouts when the history update and normalisation are also identical. These are algebraic identities; finite-precision scatter reductions can produce small numerical discrepancies.

\begin{figure}[h!]
\centering
\includegraphics[width=\textwidth]{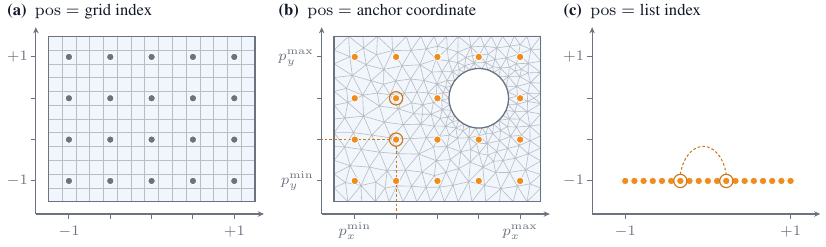}
\caption{\looseness=-1 Coordinate rotary position embedding for mesh tokens: only the source of $\mathrm{pos}$ changes. \textbf{(a)} On the grid, $\mathrm{pos}=\mathrm{linspace}(-1,1,\cdot)$ along each axis. \textbf{(b)} After retrofitting, $\mathrm{pos}$ is the anchor coordinate, rescaled by the outermost anchors~\eqref{eq:rope}. The window covered by the hole emits no token. On a lattice this recovers the angles of (a) exactly. \textbf{(c)} The stock axial embedding would place the flat list of anchors along one axis, separating vertical neighbours (ringed) and collapsing the second axis.}
\label{fig:rope}
\end{figure}

\newpage\subsection{Proof of Proposition~\ref{prop:continuum}}
\label{app:continuum}
For completeness we repeat the statement of the Proposition:

\begingroup
\renewcommand{\theproposition}{\ref{prop:continuum}}
\begin{proposition}[Continuum limit and discretisation invariance]
Suppose the mesh nodes satisfy $c_n\sum_j\delta_{\bp_j^{(n)}}\rightharpoonup\rho(y)\,dy$, with $c_n\downarrow0$ and $\rho\in L^1(\Omega)$ positive almost everywhere. For each fixed anchor whose full window lies inside $\Omega$, MNC with zero bias, $\bb=\mathbf0$, applied to $\bx_j=u(\bp_j^{(n)})$ has the joint limit
\begin{equation*}
\operatorname{MNC}_{h,n}u(\bp)\longrightarrow
|S|\,
\frac{\int_\Omega\kappa(\bp-y)u(y)\rho(y)\,dy}
{\int_{\bp-S}\rho(y)\,dy}.
\tag{\ref{eq:density-limit}}
\end{equation*}
Thus two equidistributed refinements of the same field, with the same limiting kernel, have asymptotically identical outputs at common anchors (Fig.~\ref{fig:limit}a).
\end{proposition}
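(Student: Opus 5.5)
Our approach is to write the encoder output \eqref{eq:enc} as a ratio of a numerator sum and a normalising sum, rescale both by $c_n$, and pass to the limit in each separately.

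With $\bb=\mathbf0$ and $\bK_h[\bu']=a_h\kappa(\bu')$, the definition \eqref{eq:kernel} gives $\bW_h(\bu)=a_h\,(I_h\kappa)(\bu)$. Here $I_h\kappa$ is the bilinear (tensor-product hat) interpolant of $\kappa$ on the offset lattice. Likewise $\omega_h(\bu)$ is the interpolant of the indicator of the lattice $\mathcal O$ inside $S$. At a fixed anchor $\bp$ this yields
\[
\operatorname{MNC}_{h,n}u(\bp)=P_h\,a_h\,\frac{c_n\sum_j (I_h\kappa)(\bp-\bp_j)u(\bp_j)}{\max\{c_n\sum_j\omega_h(\bp-\bp_j),\,c_n\}}.
\]
Both numerator and denominator have been multiplied by $c_n$. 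The prefactor satisfies $P_ha_h\to|S|$, since $P_h=k_hk_w$ counts the lattice points of $S$ and each has cell area $a_h$.

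\textbf{Numerator.} Because $\kappa$ is Lipschitz, $\|I_h\kappa-\kappa\|_\infty\le C\,\Lip(\kappa)(h_x+h_y)$. The functions $I_h\kappa$ are supported in $S$ up to an $O(h)$ collar, and $u$ is bounded on $\overline\Omega$. The error from replacing $I_h\kappa$ by $\kappa$ is therefore at most $C h\, c_n\#\{j:\bp_j\in \bp-S_h\}\,\|u\|_\infty$. The bracketed mass $c_n\#\{\cdots\}$ stays bounded because the measures $c_n\sum_j\delta_{\bp_j}$ converge weakly and hence have bounded total mass. The remaining term $c_n\sum_j\kappa(\bp-\bp_j)u(\bp_j)$ is the integral of the fixed continuous function $y\mapsto\kappa(\bp-y)u(y)$ against $c_n\sum_j\delta_{\bp_j}$. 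Since the window lies inside $\Omega$, this function is compactly supported and continuous, so weak convergence gives the limit $\int_\Omega\kappa(\bp-y)u(y)\rho(y)\,dy$.

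\textbf{Denominator.} This is where we expect the main obstacle, because $\omega_h$ does not converge uniformly. It equals $1$ on the interior of the lattice hull and ramps linearly to $0$ across a collar of width $O(h)$ around $\partial S$. It does converge pointwise to $\mathbf 1_{S}$ off $\partial S$, and it is squeezed between two fixed continuous functions. Precisely, for any $\varepsilon>0$ and $h$ small, $g_\varepsilon^-\le\omega_h\le g_\varepsilon^+$, where $g_\varepsilon^\pm$ are continuous cutoffs equal to $1$ on $S_{\mp\varepsilon}$ and vanishing outside $S_{\pm\varepsilon}$. Applying weak convergence to $g^\pm_\varepsilon(\bp-\cdot)$ bounds the $\liminf$ and $\limsup$ of $c_n\sum_j\omega_h(\bp-\bp_j)$ by $\int g^\pm_\varepsilon(\bp-y)\rho(y)\,dy$. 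As $\varepsilon\to0$ both bounds tend to $\int_{\bp-S}\rho$, since $\rho\in L^1$ and the boundary strip has vanishing Lebesgue measure. This sandwich argument is uniform in how $h$ and $n$ jointly tend to their limits, which is what makes the limit a genuine joint one.

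\textbf{Conclusion.} The denominator limit is positive because $\rho>0$ a.e. and $S$ has positive area. Since $c_n\to0$, the $\max$ with $c_n$ is eventually inactive. Combining the numerator limit, the denominator limit and $P_ha_h\to|S|$ yields \eqref{eq:density-limit}. Discretisation invariance then follows directly: two refinements with the same limit density $\rho$ produce the same right-hand side. For equidistributed meshes this $\rho$ is constant and cancels up to normalisation.
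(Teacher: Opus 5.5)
Your proposal is correct and follows essentially the same route as the paper's proof. Both rescale the output to a ratio by $c_n$, handle the numerator with the uniform bilinear-interpolation bound plus bounded mass and weak convergence, squeeze the occupancy sum for the denominator, note that the clipping is eventually inactive, and use $P_ha_h\to|S|$. The only variation is that you sandwich $\omega_h$ between continuous cutoffs, whereas the paper squeezes between $\mathbf 1_{S^{-\varepsilon}}$ and $\mathbf 1_S$ and applies weak convergence directly to these indicators because their boundaries are $\rho\,dy$-null.

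One small fix: your cutoff indices are swapped as written. Literally, $g^-_\varepsilon$ would have to equal $1$ on $S_{+\varepsilon}$ yet vanish outside $S_{-\varepsilon}$, and a $g^+_\varepsilon$ equal to $1$ only on $S_{-\varepsilon}$ need not dominate $\omega_h$ near $\partial S$. Instead take $g^+_\varepsilon=1$ on $S$ with support in $S_{+\varepsilon}$, and $g^-_\varepsilon$ supported in $S_{-\varepsilon}$ with $g^-_\varepsilon=1$ on $S_{-2\varepsilon}$. This works because $\mathbf 1_{S_{-\varepsilon}}\le\omega_h\le\mathbf 1_S$ once $\max(h_x,h_y)<\varepsilon$.
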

\endgroup

Here $\rightharpoonup$ denotes weak convergence of measures, in the sense that
\[
c_n\sum_j \varphi\!\left(\bp_j^{(n)}\right)
\longrightarrow
\int_\Omega \varphi(y)\rho(y)\,dy
\qquad\text{as }n\to\infty
\]
for every bounded continuous test function $\varphi\colon\Omega\to\mathbb{R}$.

\begin{proof}
Recall from the setup in \S\ref{sec:theory} that $\Omega$ is bounded, $u\in C(\overline\Omega;\mathbb R^{F_{\rm in}})$, and $\kappa$ is a Lipschitz, matrix-valued function with $\operatorname{supp}\kappa\subseteq S=[-R_x,R_x]\times[-R_y,R_y]$; in particular it vanishes on $\partial S$. All vector norms are Euclidean and matrix norms are the corresponding induced norms. Keeping the \emph{entire} support fixed as the number of learned kernel weights grows means
\begin{equation}
h_x=\frac{2R_x}{k_w+1},\quad h_y=\frac{2R_y}{k_h+1},\quad
L_x=R_x-h_x,\quad L_y=R_y-h_y.
\label{eq:fixed-support}
\end{equation}
The zero ring then lies on $\partial S$ and $P_ha_h=k_hk_wh_xh_y\to|S|$.

Put $\kappa_h=\bW_h/a_h$, where the kernel weights are chosen as in \eqref{eq:continuum-kernel}. This is the bilinear interpolant of $\kappa$, extended by zero. Convexity of the interpolation weights gives
\begin{equation}
\|\kappa_h-\kappa\|_\infty\leq L_\kappa d_h,
\qquad d_h=(h_x^2+h_y^2)^{1/2},\quad L_\kappa=\Lip(\kappa),
\label{eq:interpolation-bound}
\end{equation}
since every point in a kernel cell is at most $d_h$ from its corners. Also
\begin{equation}
0\leq\omega_h\leq1,\qquad
\int_{\mathbb R^2}\omega_h(z)\,dz=P_ha_h,
\qquad
\mathbf1_{S^{-\varepsilon}}\leq\omega_h\leq\mathbf1_S
\label{eq:occupancy-limits}
\end{equation}
once $\max(h_x,h_y)<\varepsilon$, where $S^{-\varepsilon}$ is $S$ eroded by $\varepsilon$ on each side. Each tensor-product hat integrates to $a_h$, proving the middle identity.

Write $\mu_n=c_n\sum_j\delta_{\bp_j^{(n)}}$, with weak convergence understood against every function in $C(\overline\Omega)$. In particular, $\mu_n(\Omega)$ is bounded. Uniform convergence in \eqref{eq:interpolation-bound} and weak convergence give, jointly as $h_x,h_y\to0$ and $n\to\infty$,
\begin{equation}
\int\kappa_h(\bp-y)u(y)\,d\mu_n(y)
\longrightarrow \int_\Omega\kappa(\bp-y)u(y)\rho(y)\,dy.
\label{eq:numerator-limit}
\end{equation}
For the denominator, squeeze $\omega_h(\bp-\cdot)$ between the indicators of the translated rectangles in \eqref{eq:occupancy-limits}. Their boundaries have zero $\rho\,dy$ measure, so weak convergence applies to these indicators. Letting $\varepsilon\downarrow0$ yields
\[
\int\omega_h(\bp-y)\,d\mu_n(y)\longrightarrow
\int_{\bp-S}\rho(y)\,dy>0.
\]
Since $c_n\to0$, the unscaled occupancy sum tends to infinity, and the lower clipping in \eqref{eq:enc} is eventually inactive. The bias-free output can then be written as
\[
P_ha_h\,
\frac{\int\kappa_h(\bp-y)u(y)\,d\mu_n(y)}
{\int\omega_h(\bp-y)\,d\mu_n(y)}.
\]
Together with $P_ha_h\to|S|$, these limits prove \eqref{eq:density-limit}. For constant sampling density on the window, the density and window-area factors cancel. Applying this result to two sequences with the same field and kernel gives the stated asymptotic agreement.
\end{proof}

\subsection{A connectivity counterexample for a normalised GCN}
\label{app:gcn}
The following example shows that the discretisation invariance of Proposition~\ref{prop:continuum} need not hold for a mesh-edge GNN: on two triangulations of the same vertices, a degree-normalised GCN with a fixed positive scalar weight and no bias maps the same constant field to different values at the origin for every mesh width $\eta$, whereas MNC returns identical outputs.

Consider the scalar GCN layer of \citet{kipf2017semisupervised},
\begin{equation}
G_\Gamma(\bx)=\operatorname{ReLU}\!\left(w\,\widetilde D^{-1/2}(A_\Gamma+I)\widetilde D^{-1/2}\bx\right),
\qquad \widetilde D_{ii}=1+\deg_\Gamma(i),
\end{equation}
where $A_\Gamma$ is the adjacency matrix of the mesh graph $\Gamma$ (entries one for vertices sharing an edge, zero otherwise) and $w$ is the scalar weight. We take $w=1$ and no bias. Let the vertices be $\eta(i,j)$, $-n\leq i,j\leq n$, $\eta=1/n$, and $n\geq3$.

\begin{figure}[h!]
\centering
\includegraphics[width=0.84\textwidth]{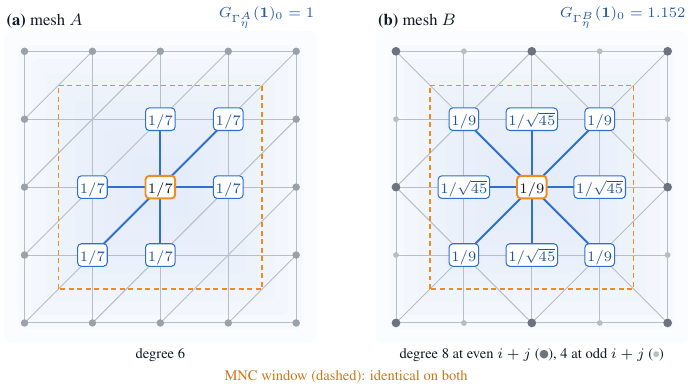}
\caption{Meshes $A$ and $B$ near the origin. Each vertex in the origin's GCN neighbourhood (blue) is labelled with its contribution to $G_\Gamma(\mathbf1)_0$, the origin's own label being its self-loop.}
\label{fig:gcn}
\end{figure}

In mesh $A$ (Figure~\ref{fig:gcn}a), split every square along its southwest-northeast diagonal. Every vertex in the origin's one-ring, including the origin, has degree six and self-loop degree seven. For $u\equiv1$, the seven contributions at the origin each equal $1/7$, giving $G_{\Gamma^A_\eta}(\mathbf1)_0=1$.

In mesh $B$ (Figure~\ref{fig:gcn}b), choose each square's diagonal to join its two vertices with even $i+j$. Even vertices have degree eight and odd vertices degree four. The origin has four axial odd neighbours and four diagonal even neighbours. Its self-loop and its four even neighbours each contribute $1/9$, while its four odd neighbours each contribute $1/\sqrt{9\cdot5}$. Therefore
\[
G_{\Gamma^B_\eta}(\mathbf1)_0=\frac59+\frac4{\sqrt{45}}
=\frac59+\frac4{3\sqrt5}=1.1518403495\ldots.
\]
Both values are positive, so ReLU leaves them unchanged. Both triangulations comprise right isosceles triangles of diameter $\sqrt2\eta$, and both represent the same constant field exactly. The discrepancy also persists for any positive scalar weight, which simply multiplies both outputs.

With common anchors, physical kernel spacings and coefficients, MNC has the same geometric neighbours and occupancies on these two meshes and hence exactly the same output. This counterexample isolates connectivity, since the sample locations themselves do not change, and shows that for the degree-normalised GCN the output depends on how the vertices are connected, not only on the sampled field, even as $\eta\to0$.

\section{Datasets}
\label{app:datasets}
This appendix provides further details on the datasets and meshes used in the numerical experiments. In all cases, every sample is a window of six consecutive context snapshots and the following target snapshot. Figures~\ref{fig:ds-tlr2d}-\ref{fig:ds-cyl} show each dataset's meshes and representative fields; Table~\ref{tab:datasets} lists the sizes. The forecasting experiments use the official TRL2D and active-matter splits and a trajectory-level resplit of the shear-flow training files (Appendix~\ref{app:protocol}); the reconstruction experiments use the official splits throughout (Appendix~\ref{app:autoencoding}).

\textbf{Turbulent radiative layer (TRL2D).} A two-dimensional mixing layer between a cold dense and a hot dilute gas with radiative cooling, on a $128\times384$ grid ($x\in[-0.5,0.5]$ periodic, $y\in[-1,2]$ open). Fields: density, pressure and two velocity components; density spans two orders of magnitude and is displayed in $\log_{10}$. Nine values of the cooling time $t_\mathrm{cool}\in\{0.03,\dots,3.16\}$ with 8/1/1 trajectories each for train/validation/test, 101 snapshots per trajectory ($\Delta t=1.6$). The mesh twin has $49{,}152$ nodes and $146{,}433$ edges. The pressure field is nearly spatially uniform (spatial standard deviation $\approx2\%$ of its magnitude), which makes its VRMSE ill-conditioned; the reconstruction experiments therefore report normalised MSE.

\begin{figure}[h!]
\centering
\begin{subfigure}[b]{0.30\textwidth}
\centering\includegraphics[width=\linewidth]{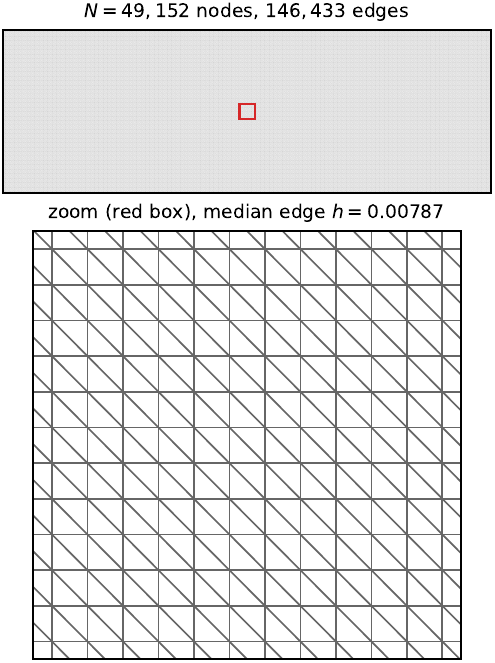}
\caption{mesh and zoom}\label{fig:ds-tlr2d-mesh}
\end{subfigure}\hfill
\begin{subfigure}[b]{0.33\textwidth}
\centering\includegraphics[width=\linewidth]{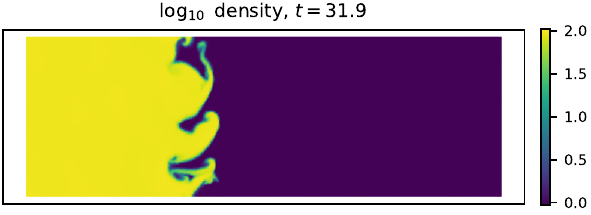}
\caption{$\log_{10}\rho$, $t=31.9$}\label{fig:ds-tlr2d-f1}
\end{subfigure}\hfill
\begin{subfigure}[b]{0.33\textwidth}
\centering\includegraphics[width=\linewidth]{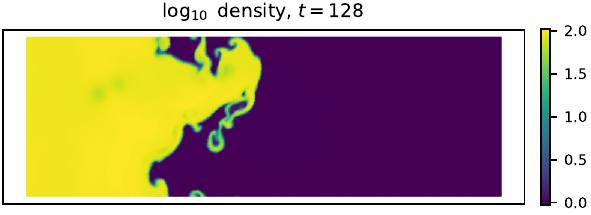}
\caption{$\log_{10}\rho$, $t=128$}\label{fig:ds-tlr2d-f2}
\end{subfigure}
\caption{TRL2D: the regular mesh twin with the zoom region marked in red and the zoomed mesh below it (left), and $\log_{10}$ density at two times along a $t_\mathrm{cool}=0.18$ trajectory. The Well mesh twins are triangulations of the original grids, so grid and mesh tokenisers see exactly the same samples.}
\label{fig:ds-tlr2d}
\end{figure}

\begin{figure}[h!]
\centering
\begin{subfigure}[b]{0.30\textwidth}
\centering\includegraphics[width=\linewidth]{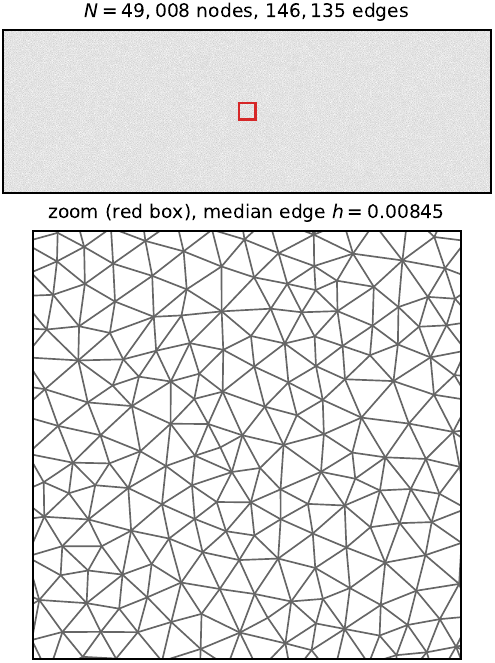}
\caption{mesh and zoom}\label{fig:ds-tlr2d-irr-mesh}
\end{subfigure}\hfill
\begin{subfigure}[b]{0.33\textwidth}
\centering\includegraphics[width=\linewidth]{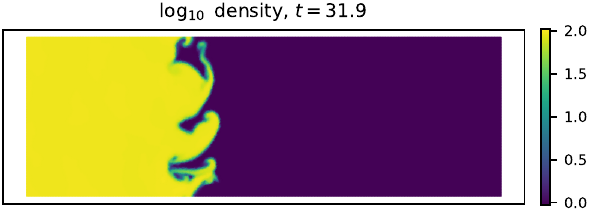}
\caption{$\log_{10}\rho$, $t=31.9$}\label{fig:ds-tlr2d-irr-f1}
\end{subfigure}\hfill
\begin{subfigure}[b]{0.33\textwidth}
\centering\includegraphics[width=\linewidth]{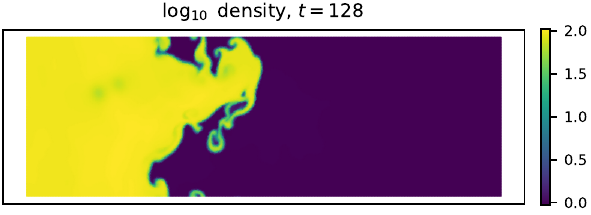}
\caption{$\log_{10}\rho$, $t=128$}\label{fig:ds-tlr2d-irr-f2}
\end{subfigure}
\caption{TRL2D on the Experiment~1 irregular mesh: the Gmsh Delaunay triangulation ($49{,}008$ nodes, $146{,}135$ edges, node count matched to the grid) with the zoom region marked in red (left), and the same $t_\mathrm{cool}=0.18$ density snapshots as Figure~\ref{fig:ds-tlr2d}, bilinearly resampled onto the mesh nodes which are exactly the samples the zero-shot-evaluated models see.}
\label{fig:ds-tlr2d-irr}
\end{figure}

\newpage
\textbf{Shear flow.} Two-dimensional incompressible shear layers with a passive tracer on a doubly periodic $256\times512$ grid over $[0,1]^2$, for 28 combinations of Reynolds number ($10^4$ to $5\cdot10^5$) and Schmidt number ($0.1$ to $10$) with 40 initial conditions each, 200 snapshots per trajectory ($\Delta t=0.1$). Fields: tracer, pressure and velocity. Depending on the initial interface thickness a trajectory either rolls up into Kelvin--Helmholtz billows (Figure~\ref{fig:ds-shear}) or stays close to its laminar profile, so the test sets mix both regimes. The mesh twin has $131{,}072$ nodes and $391{,}681$ edges. The reconstruction transfer experiments additionally triangulate the domain at matched resolution as a uniform triangular lattice and as an irregular Delaunay mesh (Figure~\ref{fig:auto-meshes}, Appendix~\ref{app:auto-transfer}); the forecasting experiments use the irregular mesh of Figure~\ref{fig:ds-shear-irr}.

\begin{figure}[h!]
\centering
\begin{subfigure}[b]{0.30\textwidth}
\centering\includegraphics[width=\linewidth]{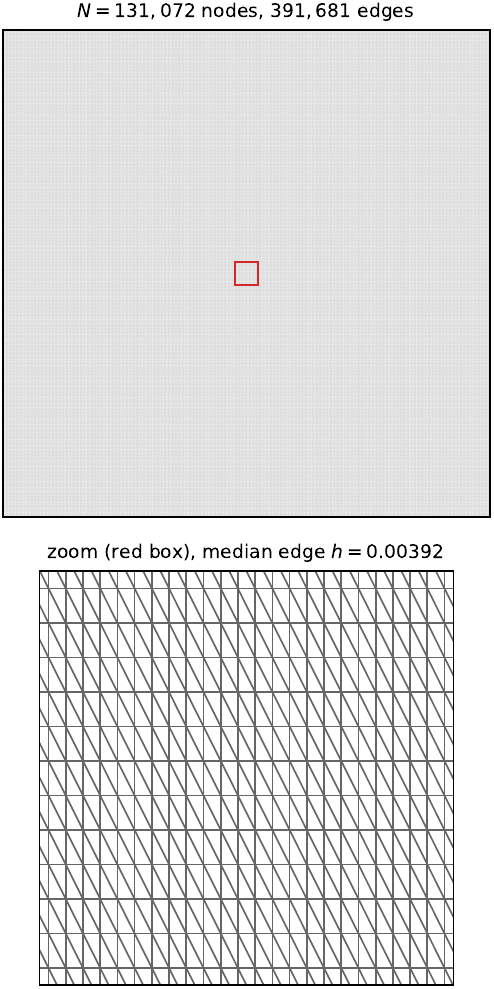}
\caption{mesh and zoom}\label{fig:ds-shear-mesh}
\end{subfigure}\hfill
\begin{subfigure}[b]{0.33\textwidth}
\centering\includegraphics[width=\linewidth]{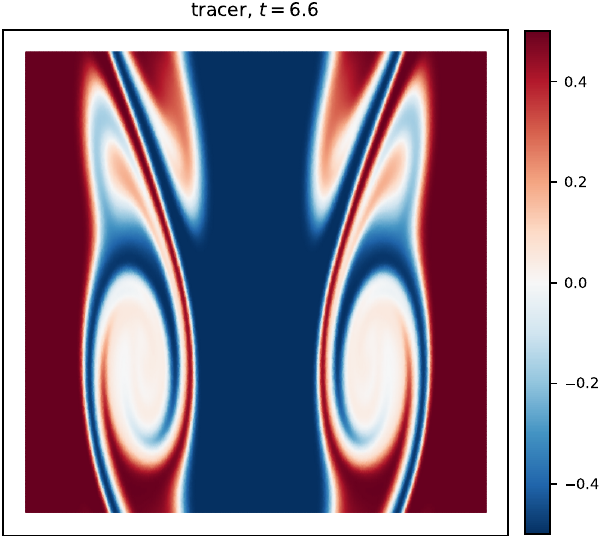}
\caption{tracer, $t=6.6$}\label{fig:ds-shear-f1}
\end{subfigure}\hfill
\begin{subfigure}[b]{0.33\textwidth}
\centering\includegraphics[width=\linewidth]{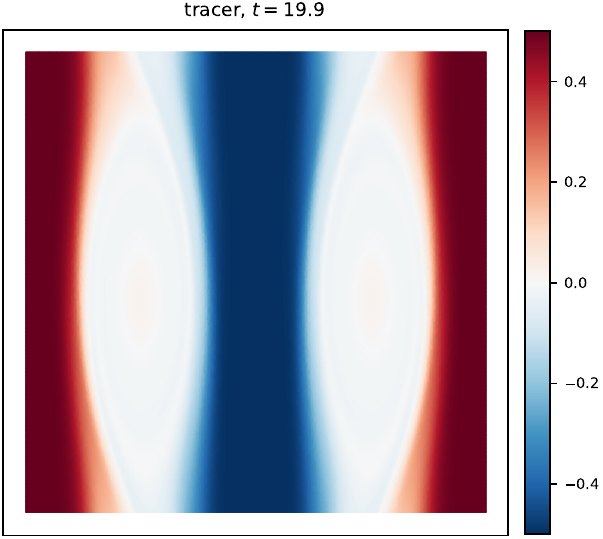}
\caption{tracer, $t=19.9$}\label{fig:ds-shear-f2}
\end{subfigure}
\caption{Shear flow: the mesh twin with zoom (left), and the tracer at two times along a $Re=10^4$, $Sc=1$ trajectory that develops Kelvin--Helmholtz billows.}
\label{fig:ds-shear}
\end{figure}

\begin{figure}[h!]
\centering
\begin{subfigure}[b]{0.30\textwidth}
\centering\includegraphics[width=\linewidth]{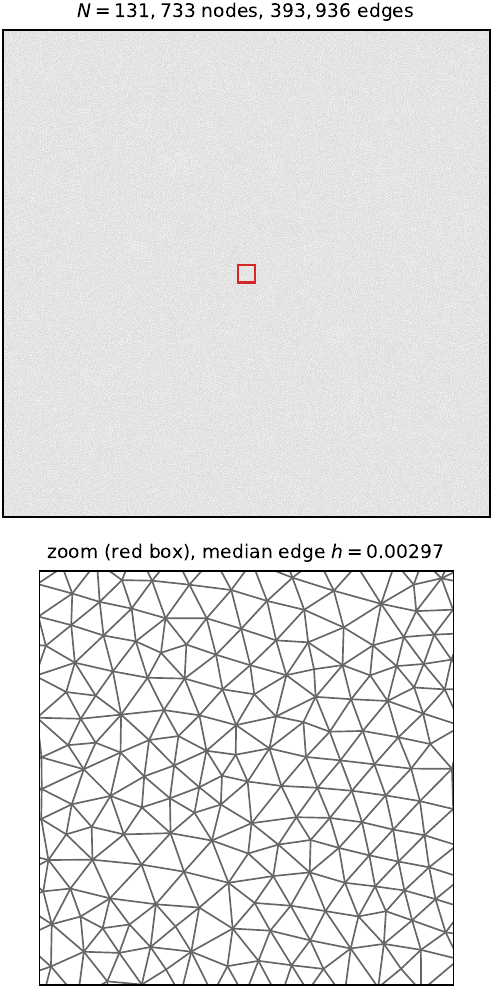}
\caption{mesh and zoom}\label{fig:ds-shear-irr-mesh}
\end{subfigure}\hfill
\begin{subfigure}[b]{0.33\textwidth}
\centering\includegraphics[width=\linewidth]{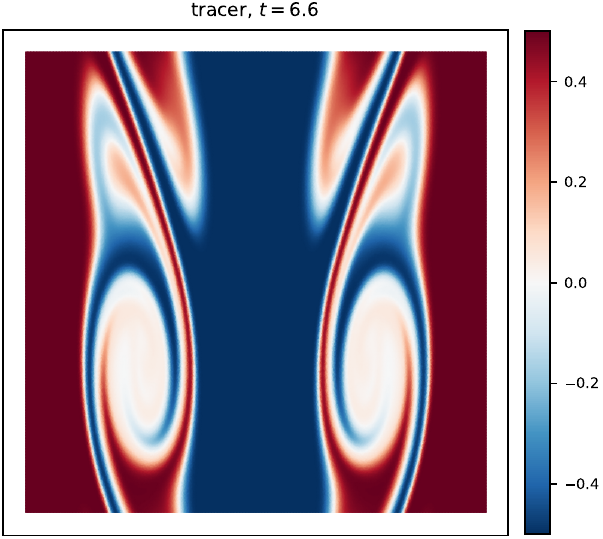}
\caption{tracer, $t=6.6$}\label{fig:ds-shear-irr-f1}
\end{subfigure}\hfill
\begin{subfigure}[b]{0.33\textwidth}
\centering\includegraphics[width=\linewidth]{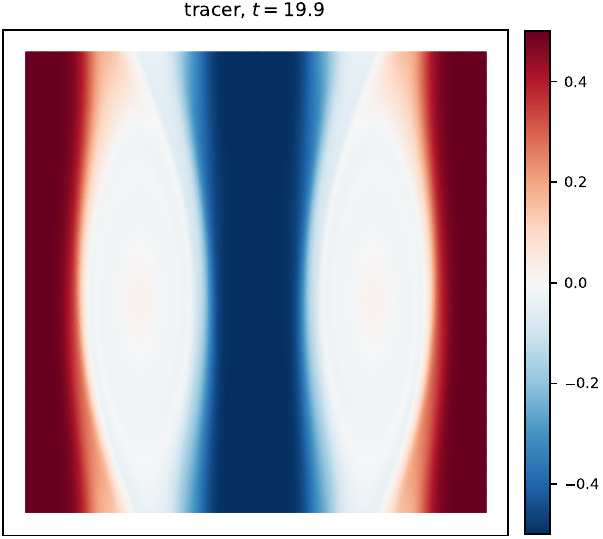}
\caption{tracer, $t=19.9$}\label{fig:ds-shear-irr-f2}
\end{subfigure}
\caption{Shear flow on the Experiment~1 irregular mesh ($131{,}733$ nodes, $393{,}936$ edges, node count matched to the grid): the same $Re=10^4$, $Sc=1$ tracer snapshots as Figure~\ref{fig:ds-shear}, bilinearly resampled onto the mesh nodes (trajectory from the official validation file; Experiment~1 itself uses the trajectory-level resplit of the training files).}
\label{fig:ds-shear-irr}
\end{figure}

\textbf{Active matter.} A two-dimensional continuum model of active rod-like particles in a Stokes fluid, on a doubly periodic $256\times256$ grid over $[0,10]^2$. Fields: the concentration, the two velocity components, and the orientation tensor $D$ and strain-rate tensor $E$ with four components each, eleven channels in total; all of them, including the tensor fields, are inputs and targets. The sweep covers the active dipole strength $\alpha\in\{-1,\dots,-5\}$ and the alignment parameter $\zeta\in\{1,3,\dots,17\}$ at fixed $\beta=0.8$, 45 settings with five trajectories each that differ in the plane-wave perturbation of the initially isotropic orientation tensor; each trajectory has 81 snapshots over $t\in[0,20]$ ($\Delta t=0.25$).
We use the official split (175/24/26 trajectories), and the source is trained at the native resolution, giving $16\times16=256$ tokens. The on-grid target is the lattice of the $65{,}536$ grid points. The irregular mesh of Experiment~1 is a Gmsh Delaunay triangulation of the square at a single characteristic length, with the node count matched to the grid: $65{,}632$ nodes and $130{,}374$ triangles, with edge lengths between $0.61h$ and $1.76h$ (median $1.07h$) in units of the grid spacing $h$. Periodicity is not built into the mesh; the grid fields are resampled onto its nodes. Unlike TRL2D and shear flow, no lattice twin is stored for active matter, so Figure~\ref{fig:ds-am-irr} shows the irregular mesh only.

\begin{figure}[h!]
\centering
\begin{subfigure}[b]{0.30\textwidth}
\centering\includegraphics[width=\linewidth]{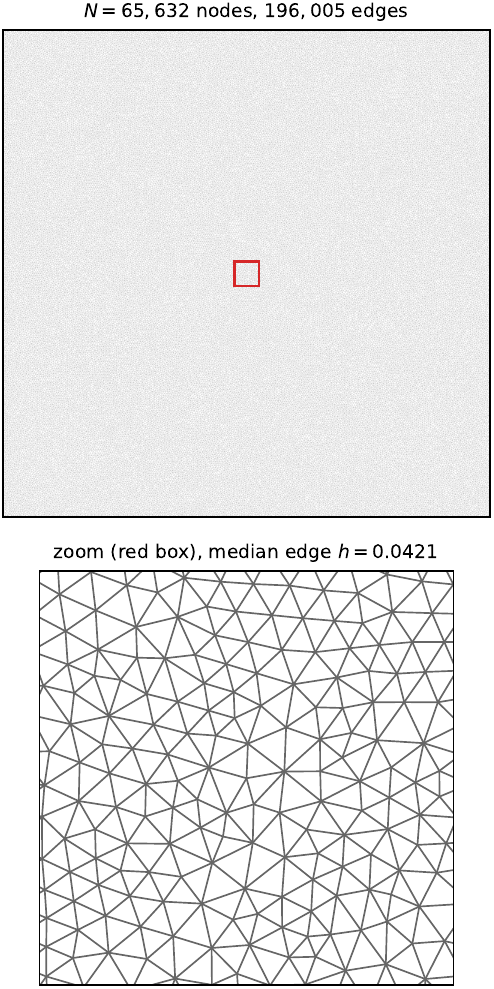}
\caption{mesh and zoom}\label{fig:ds-am-irr-mesh}
\end{subfigure}\hfill
\begin{subfigure}[b]{0.33\textwidth}
\centering\includegraphics[width=\linewidth]{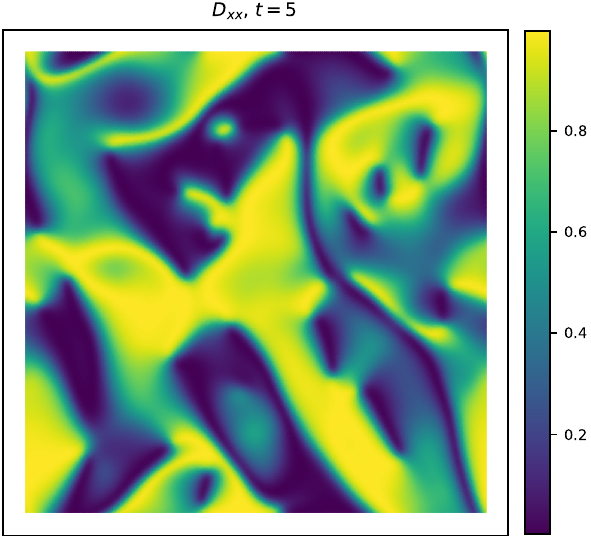}
\caption{$D_{xx}$, $t=5$}\label{fig:ds-am-irr-f1}
\end{subfigure}\hfill
\begin{subfigure}[b]{0.33\textwidth}
\centering\includegraphics[width=\linewidth]{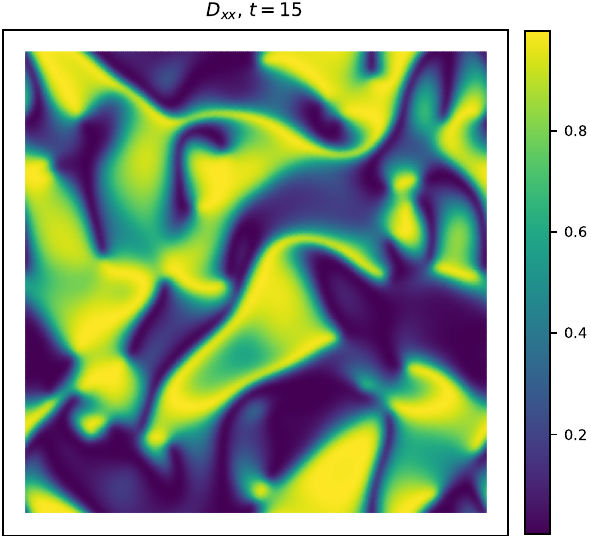}
\caption{$D_{xx}$, $t=15$}\label{fig:ds-am-irr-f2}
\end{subfigure}
\caption{Active matter on the Experiment~1 irregular mesh ($65{,}632$ nodes, $196{,}005$ edges, node count matched to the grid): the mesh with the zoom region marked in red (left), and the orientation-tensor component $D_{xx}$ at two times along a $\zeta=9$, $\alpha=-3$ test trajectory, bilinearly resampled onto the mesh nodes. The concentration is nearly uniform ($1\pm5\cdot10^{-4}$), so a tensor channel is shown.}
\label{fig:ds-am-irr}
\end{figure}
\newpage
\textbf{Cylinder flow.} The MeshGraphNets incompressible flow past a cylinder~\citep{pfaff2021meshgraphnets} (COMSOL), 1,000/100/100 trajectories for train/validation/test, each on \emph{its own} unstructured triangulated mesh ($1{,}732$-$2{,}059$ nodes, median $1{,}885$) of the $1.6\times0.41$ channel, refined towards the cylinder and the walls. Fields: pressure and velocity; 600 snapshots per trajectory ($\Delta t=0.01$). Node types mark inflow, outflow and wall nodes. This is the only dataset without a grid representation; the MNC hierarchy is built per trajectory. It is used in Experiment~3, on refined meshes (Appendix~\ref{app:cylinder-refinement}), and in the reconstruction experiments on the native meshes (Appendix~\ref{app:autoencoding}).

\begin{figure}[h!]
\centering
\begin{subfigure}[b]{0.30\textwidth}
\centering\includegraphics[width=\linewidth]{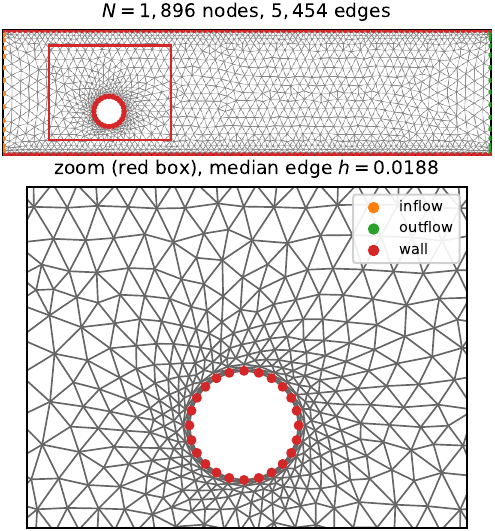}
\caption{mesh and zoom}\label{fig:ds-cyl-mesh}
\end{subfigure}\hfill
\begin{subfigure}[b]{0.33\textwidth}
\centering\includegraphics[width=\linewidth]{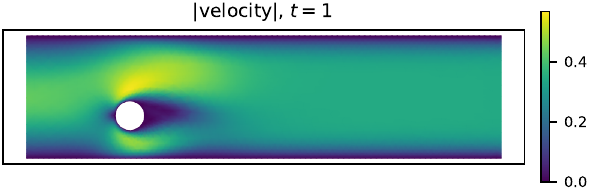}
\caption{$|\mathbf{v}|$, $t=1$}\label{fig:ds-cyl-f1}
\end{subfigure}\hfill
\begin{subfigure}[b]{0.33\textwidth}
\centering\includegraphics[width=\linewidth]{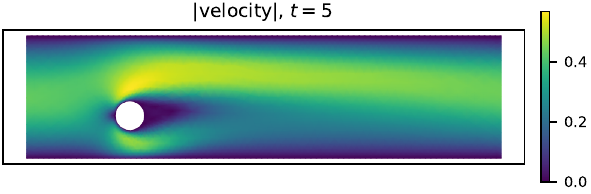}
\caption{$|\mathbf{v}|$, $t=5$}\label{fig:ds-cyl-f2}
\end{subfigure}
\caption{Cylinder flow: one of the per-trajectory meshes, with inflow, outflow and wall nodes marked in the zoom (left), and velocity magnitude at two times.}
\label{fig:ds-cyl}
\end{figure}

\subsection{Mesh refinement for cylinder flow}
\label{app:cylinder-refinement}
The native cylinder-flow meshes are far coarser than the grids on which Walrus was pretrained, so that a pretrained token window would contain only a handful of nodes. For Experiment~3 each mesh is therefore refined uniformly three times. Every refinement splits each triangle into four at its edge midpoints, and the value at a new midpoint is the mean of the values at the two end points of its edge, so the piecewise-linear field of the original mesh is reproduced exactly; no new information is added. New boundary midpoints inherit the inflow, outflow or wall type of their edge, and the cylinder remains the polygon of the original mesh. The median node count of the training meshes grows from $1{,}885$ to $7{,}306$, $28{,}756$ and $114{,}088$ over the three levels ($104{,}184$-$125{,}224$ nodes at level three). Splits are unchanged and normalisation statistics are recomputed per level. The first nodes of every refined mesh are the original nodes, in their original order and with their original values, so all predictions are scored on the original mesh nodes (Appendix~\ref{app:protocol}).

\begin{table}[h!]
\centering
\caption{Datasets summary. Trajectories are given as train/validation/test of the official splits; nodes and edges refer to the mesh representation (undirected edges): the lattice twin for TRL2D and shear flow, and the Experiment~1 irregular mesh for active matter, which has no lattice twin.}
\label{tab:datasets}
\footnotesize
\setlength{\tabcolsep}{2.8pt}
\resizebox{\textwidth}{!}{%
\begin{tabular}{lcccccc}
\toprule
Dataset & Domain / resolution & Fields & Trajectories & Steps & Nodes & Edges \\
\midrule
TRL2D & $128\times384$ grid, $[-0.5,0.5]\times[-1,2]$ & $\rho,\,p,\,v_x,\,v_y$ & 72/9/9 & 101 & $49{,}152$ & $146{,}433$ \\
Shear flow & $256\times512$ grid, $[0,1]^2$ periodic & tracer, $p,\,v_x,\,v_y$ & 896/112/112 & 200 & $131{,}072$ & $391{,}681$ \\
Active matter & $256\times256$ grid, $[0,10]^2$ periodic & $c,\,v_x,\,v_y,\,D,\,E$ (11 ch.) & 175/24/26 & 81 & $65{,}632$ & $196{,}005$ \\
Cylinder flow & unstructured $1.6\times0.41$ channel & $p,\,v_x,\,v_y$ & 1000/100/100 & 600 & $1{,}732$-$2{,}059$ & $\approx 5{,}400$ \\
\bottomrule
\end{tabular}}
\end{table}

\section{Model details}
\label{app:models}
Every model in \S\ref{sec:forecasting} is a complete autoregressive surrogate with its own processor. The baselines are run from their official or reference implementations, with minimal changes to incorporate the current data format, and are configured as in their publications wherever the data allow; the departures are listed below. Table~\ref{tab:models} summarises the architectures and Table~\ref{tab:training} the training settings. Parameter counts are for the complete model and were obtained by instantiating each configuration; they differ between datasets only through the number of input and output channels.

\begin{table}[h!]
\centering
\caption{Architectures of Experiments~1 and~2. Parameter counts are for the four-channel datasets (TRL2D, shear flow) and, in brackets, for the eleven-channel active-matter dataset. MNC and the grid source have identical counts, since the transplant replaces operations, not parameters.}
\label{tab:models}
\small
\setlength{\tabcolsep}{4pt}
\renewcommand{\arraystretch}{1.15}
\begin{tabular}{@{}>{\raggedright\arraybackslash}p{2.3cm}>{\raggedright\arraybackslash}p{2.9cm}>{\raggedright\arraybackslash}p{5.75cm}>{\raggedleft\arraybackslash}p{1.95cm}@{}}
\toprule
Model & Implementation & Configuration & Parameters \\
\midrule
MNC retrofit\newline \citep{mukhopadhyay2026overtone} & Overtone-family\newline patch modulator & 12 space-time blocks, width 384, 6 heads;\newline patch 16 (two $4\times4$, stride-4 stages) & 26,301,955\newline (same) \\
\addlinespace
GINO\newline \citep{li2023gino} & \texttt{neuraloperator} 2.0.0,\newline \texttt{GINO\_Poisson2d}\newline reference & FNO width 64, 4 layers, $20\times20$ modes;\newline latent lattice of 64 cells on the short axis;\newline GNO radius 1.5 cells & 5,099,036\newline (5,122,381) \\
\addlinespace
LNO\newline \citep{wang2024lno} & official code & width 640, 8 blocks,\newline 512 latent modes, 8 heads & 29,901,316\newline (29,932,683) \\
\addlinespace
Transolver++\newline \citep{luo2025transolver++} & official code & 8 layers, width 256, 8 heads,\newline 32 slices, MLP ratio 1 & 2,303,308\newline (2,326,611) \\
\addlinespace
MeshGraphNets\newline \citep{pfaff2021meshgraphnets} & PyTorch\newline re-implementation & 15 message-passing steps, latent 128;\newline 3-hidden-layer MLPs with LayerNorm & 2,879,748\newline (2,886,027) \\
\bottomrule
\end{tabular}
\end{table}

\textbf{MNC retrofit (Experiments~1 and~2).} The source is a patch-modulated forecasting transformer of the Overtone family~\citep{mukhopadhyay2026overtone}: a convolutional encoder of two $4\times4$, stride-four stages (channel widths $96\to384$, GroupNorm with 12 groups) lifts each frame to tokens of an overall patch size of 16, and a processor of 12 space-time blocks, each a full spatial attention with axial rotary position embedding followed by a causal axial attention in time, with query-key LayerNorm and a SwiGLU feed-forward of ratio four, evolves them; a mirrored transposed-convolution decoder returns field increments. Input and output use a fixed vocabulary of field embeddings, so the parameter count does not depend on the dataset. This gives 192 tokens on TRL2D, 256 on active matter and 512 on shear flow. The retrofit loads every tensor of the grid source by name. The four tokeniser kernels are converted into MNC kernel weights (Proposition~\ref{prop:exact}): the two encoder stages by the convolution indexing \eqref{eq:conv-indexing} and the two decoder stages, with the bias of the last one, by the transposed indexing \eqref{eq:transpose-indexing}. The conversion aborts if any tensor is neither copied nor converted. The mesh hierarchy inherits the source's kernel spacings and window origin and must reproduce the source's token count. The rotary embedding of the spatial attention is evaluated at the anchor coordinates \eqref{eq:rope}. Before any retrofit is used, a check on the source grid's own points confirms that the transplanted model reproduces the source to a relative error below $10^{-3}$.

\textbf{GINO.} We use \texttt{neuralop.models.GINO} from \texttt{neuraloperator} 2.0.0 in the library's \texttt{GINO\_Poisson2d} reference configuration: input GNO MLP $[256,512,256]$, output GNO MLP $[512,1024,512]$, a 16-channel positional embedding, linear GNO transforms with \texttt{half\_cos} weighting, and an FNO with lifting and projection ratio four and group normalisation. The latent lattice has 64 square cells along the short axis of the domain ($64\times192$ on TRL2D, $64\times64$ on active matter), and the GNO radius is set to 1.5 latent cells instead of the reference value, which would produce an impractical number of edges on grids of $5\cdot10^4$-$1.3\cdot10^5$ nodes.

\textbf{LNO.} We use the official Latent Neural Operator code with vanilla attention, GELU activations and no time input, sized to the same parameter class as the MNC source (width 640, 8 blocks, 512 latent modes, 8 heads) rather than the smaller configurations of~\citet{wang2024lno}. Positions are normalised by the domain's $x$-extent.

\textbf{Transolver++.} We use the official Transolver++ code without modification except that its multi-GPU \texttt{all\_reduce} calls are removed for single-GPU runs. The official repository releases only the million-scale model, so the size follows the configuration of the paper's standard benchmarks (8 layers, width 256, 8 heads, 32 slices, MLP ratio 1). As in the official code, the Gumbel noise on the slice weights is also drawn at evaluation, and node coordinates are concatenated to the input features.

\textbf{MeshGraphNets.} We use an established PyTorch re-implementation with the published processor size (15 message-passing steps, latent width 128). Its MLPs have one hidden layer more than DeepMind's (2.9M instead of 2.3M parameters) and use Xavier initialisation; the decoder is widened to the dataset's channel count. Edge features are the relative position and its length, standardised once per mesh. On the grid the model runs on a triangulated lattice with a fixed diagonal. To keep the comparison with the other models controlled, it receives the same per-sample-normalised stacked frames as they do, without node-type one-hot encoding, running normalisers or training noise.

\textbf{Common interface of the baselines.} All four baselines read the six context frames stacked as input channels, with the node coordinates, and predict the increment of the last frame; the MNC model is trained on the increments at all six causal positions. All models share the per-sample, area-weighted normalisation of fields and increments and the area-weighted MSE loss (Appendix~\ref{app:protocol}).

\begin{table}[h!]
\centering
\caption{Training settings of Experiments~1 and~2, shared by all architectures. One ``epoch'' in Figure~\ref{fig:exp1-trl2d-finetune} is 100 mesh-training updates.}
\label{tab:training}
\small
\setlength{\tabcolsep}{4pt}
\begin{tabular}{ll}
\toprule
Setting & Value \\
\midrule
Optimiser & AdamW, $\beta=(0.9,0.95)$, weight decay $10^{-4}$, gradient-norm clip 5 \\
Learning rate & $10^{-4}$ for every architecture and stage \\
Grid pretraining & 50,000 updates at batch 16; 500 warm-up updates, cosine decay to $5\%$ \\
Mesh training (retrofit and scratch) & 20,000 updates at batch 8; 100 warm-up updates, constant rate \\
Seeds & three per mesh-training arm; retrofits start from the seed-0 source \\
Context / target & six frames / field increment \\
Regularisation & drop path 0.1 (MNC only); no dropout \\
Precision / hardware & fp32 (TF32 disabled); one NVIDIA GH200 (Isambard-AI) per run \\
\bottomrule
\end{tabular}
\end{table}

Large baselines are trained with gradient accumulation over equal micro-batches, so the effective batch is unchanged. By design, one learning rate is used for every architecture and stage; we did not tune it per architecture.

\textbf{Walrus retrofit (Experiment~3).} We use the released Walrus checkpoint~\citep{mccabe2026walrus} (Hugging Face \texttt{polymathic-ai/walrus}, 1.3B parameters) with the unmodified upstream code. Walrus is an isotropic space-time transformer: 40 blocks of width 1408, each a full spatial attention with 16 heads and axial rotary embedding followed by an axial time attention, and a two-stage adaptive-stride convolutional encoder and decoder whose first stage selects the embedding columns of the fields present. Cylinder flow is presented in the format of the FlowBench flow-past-object data in Walrus's pretraining set: pressure and velocity (padded with a zero third component) on a $512\times128$ canvas over the $1.6\times0.41$ channel, with the cylinder as a constant mask field, open inflow and outflow boundaries, and walls at the top and bottom. MNC replaces only the outermost convolution on each side: the first encoder stage gathers directly from the mesh nodes and the last decoder stage scatters back to them; the inner stages, normalisations and processor are unchanged, and the processor keeps Walrus's axial rotary embedding on the token grid. The MNC kernel weights are Walrus's own kernels, so the retrofit adds no parameters. Normalisation is Walrus's samplewise RMS, area-weighted over the channel, and the model predicts the next increment from six frames. The meshes are uniformly refined three times, reproducing the piecewise-linear field exactly (median $1.1\cdot10^5$ nodes), and all scores are computed on the original mesh nodes (Appendix~\ref{app:cylinder-refinement}).

Fine-tuning trains all parameters with AdamW (weight decay $10^{-4}$), learning rate $10^{-5}$ after 20 warm-up updates, gradient-norm clip 10 and batch eight (one window per micro-batch, with activation recomputation), in fp32 with TF32 matrix products on one GH200, for a 7.5-hour budget (about 1.4k updates). The random-initialisation arm uses the same architecture and recipe from Walrus's own initialisation. The cylinder baselines use the implementations and sizes of Table~\ref{tab:models}, trained from scratch at mesh level three on the same windows, normalisation, optimiser and time budget, with learning rates $10^{-3}$ and $10^{-4}$ tried and the better one reported ($10^{-3}$ for MeshGraphNets, Transolver++ and GINO; $10^{-4}$ for LNO). Being non-causal, they are trained on the last increment only.

\section{Forecasting protocol}
\label{app:protocol}
This appendix records how the autoregressive experiments of \S\ref{sec:forecasting} are trained and evaluated. The architectures and optimiser settings are given in Appendix~\ref{app:models}; the tokeniser reconstruction comparison in Appendix~\ref{sec:tokeniser-experiments} uses a different protocol, detailed there. All experiments were performed on NVIDIA GH200 GPUs with 96\,GB of memory, one GPU per run.

\subsection{Splits and windows}
We use the official train/validation/test splits for TRL2D and active matter (Table~\ref{tab:datasets}). For shear flow we resplit the official training files by trajectory, stratified by $(Re,Sc)$, into 727/84/84 trajectories after dropping one duplicate; grid pretraining and all mesh training use only the training trajectories of this split. A sample is a window of six consecutive context frames and the following target frame. Every start time with stride one defines a window, giving 95, 75 and 194 windows per trajectory for TRL2D, active matter and shear flow. Training passes over all windows in a fresh random order each pass, seeded by the run seed. The test windows are drawn once with a fixed seed that does not depend on the run, so every model of a dataset is evaluated on the same windows. Intermediate evaluations use a fixed subset of 128 test windows. The final evaluation uses all 855 TRL2D test windows and 1,024 windows each for active matter and shear flow; these contain the 128.

\subsection{Normalisation and training objective}
Each window is normalised with its own statistics. Per channel, the mean and standard deviation of the six context frames and, separately, of their five consecutive increments are computed with the quadrature weights $w_j$ and averaged over time. The model receives the normalised context and predicts normalised increments, which are mapped back to physical fields by adding the denormalised increment to the last context frame. The loss is the area-weighted mean-squared error of the normalised increments. For the MNC model it is taken at all six causal positions; the baselines, which are not causal in time, predict only the increment from the last context frame to the target.

Spatial means use quadrature weights $w_j$, normalised to sum to one: the cell area on the grid and, on the irregular mesh, the lumped P1 mass $\tfrac13\sum_{T\ni j}|T|$ of node $j$. They enter the statistics, the loss and the metrics, so that sums over nodes approximate integrals over the domain whatever the node density. They are not inserted into the MNC aggregation, whose normalisation remains \eqref{eq:enc}; this distinction matters for interpreting Proposition~\ref{prop:continuum}.

\subsection{Meshes and field resampling}
The irregular mesh of each dataset is a Gmsh Delaunay triangulation of the grid's bounding rectangle at one characteristic length, without physics-based grading, with boundary nodes on the rectangle's edges and the node count matched to the grid (within $0.5\%$ for all three datasets; Appendix~\ref{app:datasets}). Field values at its nodes are bilinear interpolants of the simulation arrays, computed from cached weights. The targets are therefore samples of the same grid data, not solutions recomputed on a different discretisation.

The physical window origin and kernel spacings of the MNC hierarchy are inherited from the source grid and held fixed, and source and target hierarchies have the same token counts: 192 for TRL2D, 256 for active matter and 512 for shear flow. Windows are not wrapped across periodic boundaries: at the edge of the domain they contain only the samples inside it. Neighbour lists, relative offsets, spline weights and occupancies depend only on the geometry and are cached.

\subsection{Metrics}
\label{app:metrics}
All metrics are computed on denormalised physical fields. For a predicted frame $\hat u$ with truth $u$ and a channel $c$, the variance-normalised RMSE~\citep{ohana2024well} is
\begin{equation}
\mathrm{VRMSE}_c(\hat u,u)=\Bigl(\frac{\sum_j w_j\,(\hat u_{j,c}-u_{j,c})^2}{\sum_j w_j\,(u_{j,c}-\bar u_c)^2+10^{-6}}\Bigr)^{1/2},\qquad \bar u_c=\sum_j w_j\,u_{j,c},
\label{eq:vrmse}
\end{equation}
that is, the relative $L^2$ error with respect to the deviation of the truth from its spatial mean. On the grid all $w_j$ are equal and this is the VRMSE of The Well. Scores are averaged over all $C$ channels of a dataset, including the tensor fields of active matter: $\mathrm{VRMSE}(\hat u,u)=\frac1C\sum_{c=1}^C\mathrm{VRMSE}_c(\hat u,u)$.

\textbf{Next-step VRMSE.} For a test window $w$ with context ending at frame $t_w$, the model predicts $\hat u_{t_w+1}$ from the true context, and the window score is $e_w=\mathrm{VRMSE}(\hat u_{t_w+1},u_{t_w+1})$.

\textbf{Rollout VRMSE.} A rollout starts from a true six-frame context ending at frame $t_s$ and predicts $K=10$ frames autoregressively: each prediction $\hat u^{(s)}_{t_s+k}$ is appended to the context, the oldest frame is dropped and the normalisation statistics are recomputed before the next call, so no ground truth enters after the initial context. The rollout score is the channel-mean VRMSE averaged over the ten steps,
\begin{equation}
r_s=\frac1K\sum_{k=1}^{K}\mathrm{VRMSE}\bigl(\hat u^{(s)}_{t_s+k},\,u_{t_s+k}\bigr).
\label{eq:rollout-vrmse}
\end{equation}
Rollouts start at five evenly spaced times in every test trajectory.

\textbf{Aggregation.} For the tables, window and rollout scores are first averaged within each test trajectory $\tau$, giving $\bar e_\tau$ over its test windows and $\bar r_\tau$ over its five starts, and the reported next-step and rollout VRMSE are the medians of $\bar e_\tau$ and $\bar r_\tau$ over test trajectories. The training curves of Figure~\ref{fig:exp1-trl2d-finetune} use the same aggregation, restricted to the 128 fixed test windows at each evaluation. Transolver++ draws Gumbel noise in its slice weights at evaluation, so its scores are single draws.

\subsection{Evaluation schedule and reported checkpoints}
Sources are evaluated every 5,000 of their 50,000 updates and mesh-training runs every 1,000 of their 20,000 updates (21 points); rollouts are run at 0, 25k and 50k source updates and at 0, 5k, 10k and 20k mesh updates. All reported numbers are from the final checkpoint of a run; the validation split is monitored but not used for model or checkpoint selection. The zero-shot scores of Experiment~1 are the evaluation at step zero of the retrofit run, that is, the grid-pretrained source transplanted onto the irregular mesh before any mesh update, evaluated on the full set of test windows and rollouts. For MNC the weights are loaded directly from the grid-pretrained CNN components. The spread in Table~\ref{tab:exp1-zeroshot} is over grid sources pretrained with three seeds. The mesh-training runs of Experiment~2 all start from the seed-0 source, and their spread is over three mesh-training seeds.

\subsection{Experiment 3: cylinder flow}
Experiment~3 is scored on the original MeshGraphNets meshes, although the models run on meshes refined three times: the first nodes of each refined mesh are the original ones, in order and with their original values, so all levels are compared on the same nodes. Next-step VRMSE uses 16 random start times in each of the first ten test trajectories (160 windows), and ten-step rollouts use 16 other start times per trajectory. Here VRMSE weights all nodes equally: per channel (pressure and the two velocity components) it is the RMSE over the original nodes divided by the spatial standard deviation of the truth, averaged over channels, with the median over windows reported.

\section{Additional numerical results}
\label{app:additional}
Figure~\ref{fig:trl2d-onestep-grid} shows the one-step density error of every model on a single TRL2D test window on the irregular mesh (test trajectory 7, frame 79 predicted from frames 73-78), for the four starting points of Experiments~1 and~2: the grid-pretrained model zero-shot and after fine-tuning on the mesh, and the same architecture trained on the mesh from scratch, at initialisation and after training. The retrofitted Overtone+MNC model already resolves the mixing layer zero-shot, with a fainter error field than any baseline at any stage, and fine-tuning reduces its error further; trained from scratch, the same architecture ends slightly above the fine-tuned retrofit. Predicting no change (persistence) scores $0.308$ on this window.

\begin{figure}[h!]
  \centering\scriptsize
  \setlength{\tabcolsep}{1pt}
  \newcommand{\trlimg}[1]{\includegraphics[width=0.228\textwidth]{images/campaign/trl2d_onestep_panels/#1}}
  \begin{tabular}{cccc}
    \trlimg{traj7_t73_truth.pdf} & \trlimg{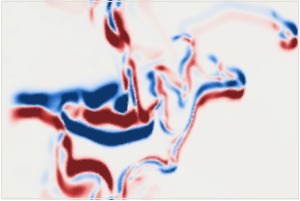} & \trlimg{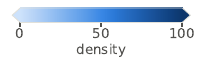} & \trlimg{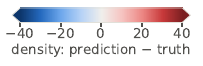} \\
    Ground truth & Persistence (no change), 0.308 & & \\[3pt]
    Pretrained, zero-shot & Pretrained, fine-tuned & Scratch, initialisation & Scratch, trained \\
    \trlimg{traj7_t73_mnc_zeroshot_err.pdf} & \trlimg{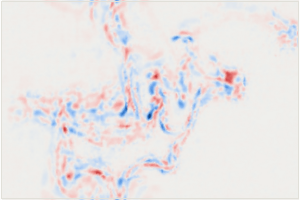} & \trlimg{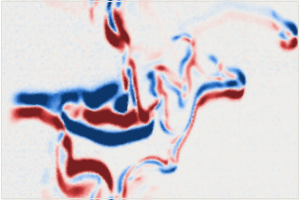} & \trlimg{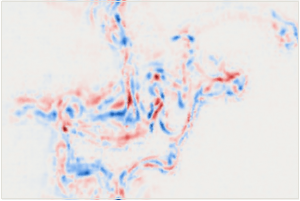} \\
    Overtone+MNC (ours), 0.072 & 0.066 & 0.309 & 0.086 \\[2pt]
    \trlimg{traj7_t73_gino_zeroshot_err.pdf} & \trlimg{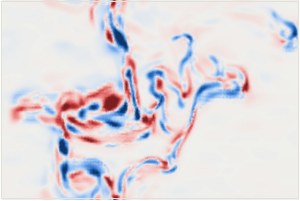} & \trlimg{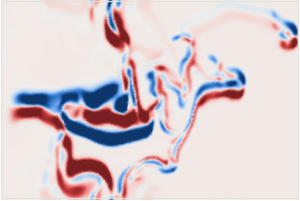} & \trlimg{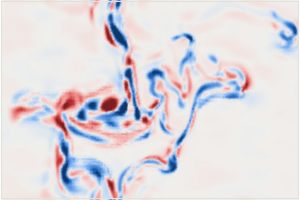} \\
    GINO, 0.155 & 0.154 & 0.309 & 0.157 \\[2pt]
    \trlimg{traj7_t73_lno_zeroshot_err.pdf} & \trlimg{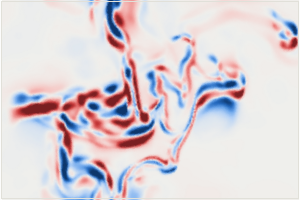} & \trlimg{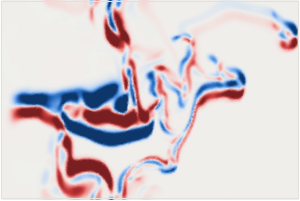} & \trlimg{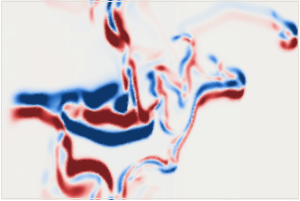} \\
    LNO, 0.271 & 0.205 & 0.308 & 0.308 \\[2pt]
    \trlimg{traj7_t73_transolverpp_zeroshot_err.pdf} & \trlimg{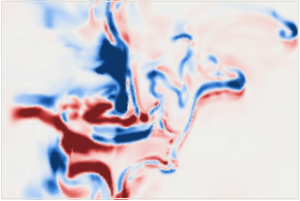} & \trlimg{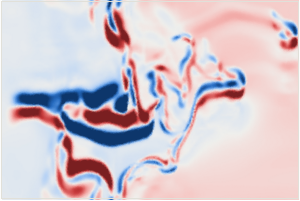} & \trlimg{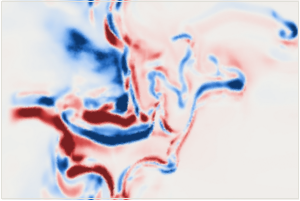} \\
    Transolver++, 0.278 & 0.270 & 0.321 & 0.235 \\[2pt]
    \trlimg{traj7_t73_mgn_zeroshot_err.pdf} & & \trlimg{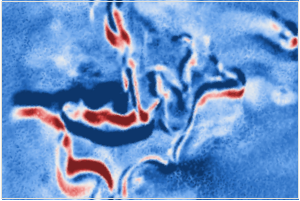} & \trlimg{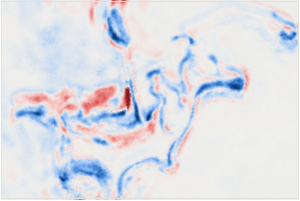} \\
    MeshGraphNets, 0.148 & & 0.624 & 0.116 \\[2pt]
  \end{tabular}
  \caption{One-step density error (prediction $-$ truth) on the irregular TRL2D mesh, test trajectory 7, frame 79: grid-pretrained models zero-shot and after fine-tuning, and the same architectures trained from scratch, at initialisation and after 20k updates. Numbers are the density VRMSE of this window. The fine-tuned MeshGraphNets panel was not exported for this window.}
  \label{fig:trl2d-onestep-grid}
\end{figure}

\section{Tokeniser reconstruction and transfer}
\label{sec:tokeniser-experiments}
\label{app:autoencoding}
In this appendix we isolate the encoder-decoder through snapshot reconstruction, comparing MNC with the tokenisers of GINO~\citep{li2023gino} and LNO~\citep{wang2024lno}, without their latent processors. The complete GINO and LNO models of the forecasting experiments are much larger and have their own processors (Appendix~\ref{app:models}); their results should not be conflated with those reported here. CNN and MNC have about 17k parameters, GINO about 22k, and LNO about 21k with 32 latent modes. CNN, MNC and GINO have matched bottleneck sizes; LNO's smaller bottleneck is examined separately in Appendix~\ref{app:lno-parity}.

\begin{table}[h!]
  \centering
  \caption{Tokeniser reconstruction from scratch after 100 epochs: test normalised MSE, mean $\pm$ standard deviation over three seeds. The Well grid twins use identical samples; cylinder flow uses native meshes. LNO uses 32 modes; latent-parity results are in Appendix~\ref{app:lno-parity}.}
  \label{tab:autoencoding}
  \small
  \begin{tabular}{lccc}
    \toprule
    Tokeniser & TRL2D & Shear flow & Cylinder flow \\
    \midrule
    CNN (grid) & {\boldmath$\pmsci{1.62}{0.16}{-3}$} & $\pmsci{1.18}{0.03}{-3}$ & --- \\
    MNC (ours) & $\pmsci{1.68}{0.06}{-3}$ & {\boldmath$\pmsci{1.15}{0.08}{-3}$} & {\boldmath$\pmsci{2.58}{0.01}{-2}$} \\
    GINO & $\pmsci{1.79}{0.01}{-2}$ & $\pmsci{5.25}{0.01}{-2}$ & $\pmsci{2.29}{0.10}{-1}$ \\
    LNO (32 modes) & $\pmsci{9.93}{0.02}{-2}$ & $\pmsci{7.37}{0.74}{-2}$ & $\pmsci{9.12}{0.64}{-2}$ \\
    \bottomrule
  \end{tabular}
\end{table}

Table~\ref{tab:autoencoding} compares training from random initialisation on two Well datasets~\citep{ohana2024well} and native cylinder-flow meshes~\citep{pfaff2021meshgraphnets}. On the Well grid twins all models see identical samples: MNC matches the CNN, while the other tokenisers have higher errors under this protocol. On cylinder flow MNC has the lowest reconstruction error, without requiring a grid representation.

\begin{figure}[h!]
\centering
\begin{minipage}[c]{0.47\textwidth}
\textbf{Grid-to-mesh transfer.} We pretrain CNN, GINO and LNO on the $256\times512$ shear-flow grid for 100 epochs. CNN weights are loaded into MNC; GINO and LNO load their own pretrained weights, retaining the target geometry. Each is then fine-tuned on the full training split, bilinearly resampled onto an irregular triangulation at matched resolution. Dashed curves use the same target architectures, data and mesh-training budgets, but start from random weights.
\end{minipage}\hfill
\begin{minipage}[c]{0.51\textwidth}
\centering\includegraphics[width=\linewidth]{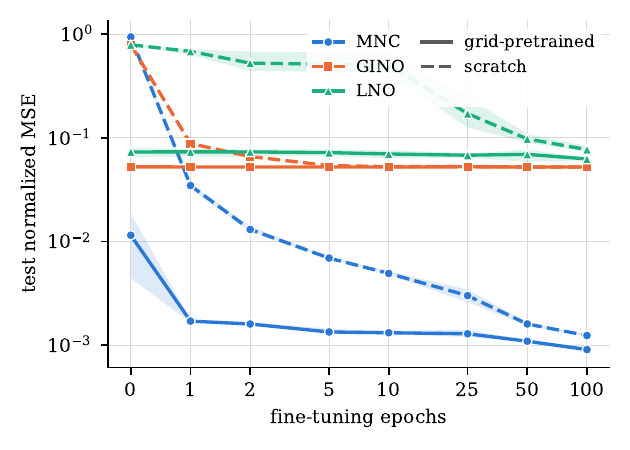}
\end{minipage}
\caption{Shear-flow autoencoding after a change of discretisation. Solid curves: CNN$\to$MNC, GINO$\to$GINO and LNO$\to$LNO transfer. Dashed curves: the corresponding models trained from scratch. Bands show variation across three seeds. Epoch zero is before any mesh training; source pretraining is additional to the horizontal-axis budget.}
\label{fig:transfer-shear}
\end{figure}

MNC starts near $10^{-2}$ normalised MSE and reaches about $1.5\times10^{-3}$ after one mesh-training epoch, close to its scratch-trained error after 100 epochs (Figure~\ref{fig:transfer-shear}). GINO and LNO also benefit from pretraining at initialisation, but remain at substantially higher errors and approach their scratch-trained levels as training proceeds. Thus the result is not that other tokenisers cannot transfer: the CNN$\to$MNC construction preserves a more accurate field representation and makes it available after far less mesh training. Figure~\ref{fig:overview}c shows the corresponding zero-shot reconstructions. This compares the usefulness of the trained tokenisers, not remeshing penalties at matched source accuracy. Lattice and cross-domain controls are in Appendices~\ref{app:auto-transfer} and~\ref{app:cylinder-transfer}.

\FloatBarrier
\subsection{Data, architecture and training}
\textbf{CNN and MNC.} A pointwise projection maps the input to 16 channels, followed by a $4\times4$/stride-four layer and a $2\times2$/stride-two layer to 64 channels, the mirrored decoder, and a pointwise output projection. GroupNorm and GELU are applied between stages. There are no skip connections, and only the output projection has a bias. Parameter counts are 16,805 for the five-channel interface (the four Well fields plus a zero channel) and 16,739 for the three-channel cylinder example. The overall reduction factor is eight, rather than 16 in forecasting.

\textbf{Protocol.} Only the latest frame is reconstructed; six context frames supply samplewise, per-channel RMS normalisation. Training uses MSE in these normalised variables, Adam at $10^{-3}$, and 100 epochs of 200 sampled windows at batch size one, without warm-up or a learning-rate schedule. All three tokeniser families use this protocol for source pretraining and mesh fine-tuning. Reconstruction MSE is not the physical-field VRMSE of \S\ref{sec:forecasting}.

\textbf{Grid twins and native meshes.} The Well controls use identical field samples for CNN, MNC and the baselines. TRL2D has $128\times384$ samples; shear flow has $256\times512$. The shear reconstruction experiments retain the original dataset split, not the forecasting resplit. The MeshGraphNets cylinder-flow dataset~\citep{pfaff2021meshgraphnets} has a separate native mesh per trajectory, approximately 1.9k nodes, and three fields: pressure and two velocities (Figure~\ref{fig:ds-cyl}, Appendix~\ref{app:datasets}). We use its 1,000/100/100 training/validation/test split and evaluate 500 sampled test windows. MNC builds a hierarchy for each geometry; in this dataset its two levels typically contain about 132 and 33 anchors. This tests reconstruction on a domain containing an obstacle, not grid-pretrained cylinder forecasting.

\subsection{GINO and LNO tokeniser configurations}
\label{app:tokeniser-baselines}
\textbf{GINO.} We use the \texttt{neuraloperator} implementation with its FNO processor removed (\texttt{fno\_n\_layers=0}): input GNO, pointwise lifting, output GNO and projection. Kernel MLPs act on relative coordinates with a 16-channel sinusoidal embedding; the latent width is 64. There are 21,674 parameters with five fields and 21,350 with three. The regular latent grid matches the CNN/MNC bottleneck: $16\times48$ for TRL2D, $32\times64$ for shear flow and $3\times11$ for cylinder flow, giving 49,152, 131,072 and 2,112 latent scalars. The input radius is five mesh cells and the output radius one latent cell. These are broader, overlapping windows, rather than exactly the CNN's effective footprint. Kernel weights do not depend on the latent-grid size, so the same model can be evaluated with target geometry after pretraining.

\textbf{LNO.} We use the cross-attention encoder and decoder with the latent processor removed (\texttt{n\_block=0}). Position and field MLPs produce 48-channel embeddings. Position-dependent scores are normalised over input nodes to encode each latent mode and over modes to decode at each query point. The main comparison and all transfer experiments use 32 modes, or 1,536 latent scalars, with approximately 21k parameters. Unlike GINO and MNC, this is a smaller bottleneck rather than a latent-size-matched control. Increasing the mode count also changes the parameters and cost; Table~\ref{tab:lno-parity} reports the available matched-bottleneck ablation.

\label{app:lno-parity}
\begin{table}[h!]
  \centering
  \caption{LNO mode-count ablation: test normalised MSE after 100 epochs of direct
  training (run means). The 32-mode and MNC rows repeat Table~\ref{tab:autoencoding} for
  reference.}
  \label{tab:lno-parity}
  \small
  \begin{tabular}{lcc}
    \toprule
    Tokeniser & TRL2D & Shear flow \\
    \midrule
    LNO (32 modes) & $\sci{9.93}{-2}$ & $\sci{7.37}{-2}$ \\
    LNO (latent parity) & $\sci{9.84}{-2}$ & $\sci{4.17}{-2}$ \\
    MNC (ours) & {\boldmath$\sci{1.68}{-3}$} & {\boldmath$\sci{1.15}{-3}$} \\
    \bottomrule
  \end{tabular}
\end{table}

The latent-parity variants use 1,024 modes on TRL2D and 2,730 on shear flow (49,152 and 131,040 latent scalars, respectively). They have 69,669 and 153,263 parameters. The larger bottleneck improves shear reconstruction, but does not close the gap to MNC under the reported protocol. This is a direct-training control, not an additional latent-parity transfer experiment.

\subsection{Execution cost}
\begin{table}[h!]
  \centering
  \caption{Mean training time per epoch in seconds for the reconstruction study (each model's summed
  per-step time on the same single NVIDIA GH200; mean $\pm$ standard deviation over three seeds).}
  \label{tab:timings}
  \small
  \begin{tabular}{lccc}
    \toprule
    Tokeniser & TRL2D & Shear flow & Cylinder flow \\
    \midrule
    CNN (grid) & \textbf{0.55 $\pm$ 0.04} & \textbf{0.61 $\pm$ 0.03} & --- \\
    MNC (ours) & 3.28 $\pm$ 0.19 & 3.46 $\pm$ 0.10 & 5.80 $\pm$ 0.18 \\
    GINO & 1.67 $\pm$ 0.02 & 3.89 $\pm$ 0.09 & 1.32 $\pm$ 0.05 \\
    LNO (32 modes) & 1.85 $\pm$ 0.00 & 4.45 $\pm$ 0.01 & \textbf{0.81 $\pm$ 0.02} \\
    \bottomrule
  \end{tabular}
\end{table}
The timings in Table~\ref{tab:timings} count training steps on one NVIDIA GH200 and exclude one-off geometry construction. MNC is approximately six times slower than dense CNN execution on these grid twins, and it is not uniformly faster than the other mesh tokenisers. The argument for transfer is the preservation of a useful representation and a reduction in target-mesh training, not intrinsically cheaper sparse layers. These measurements are not timings of the larger forecasting models.

\FloatBarrier
\subsection{Grid-to-mesh transfer and companion lattice control}
\label{app:auto-transfer}
For Figure~\ref{fig:transfer-shear}, MNC loads the CNN's projections, convolutional and transposed-convolutional weights, GroupNorm affine parameters and output bias. No MNC parameters are newly initialised in this within-dataset transplant. The source's physical kernel spacings and window origin are retained. GINO and LNO replace only the geometry buffers with the target mesh geometry.

The irregular target and the uniform triangular lattice in Figure~\ref{fig:auto-meshes} have matched resolution, but their points do not coincide with the source-grid samples. Field values are obtained by bilinear resampling. Figure~\ref{fig:transfer-lattice} repeats the comparison on the uniform lattice: the relative accuracy and fast initial recovery of transferred MNC persist. Neither comparison is an exactness test on identical inputs, because the sample positions have changed.

\begin{figure}[h!]
\centering
\begin{subfigure}[b]{0.47\textwidth}
\centering\includegraphics[width=\linewidth,height=0.21\textheight,keepaspectratio]{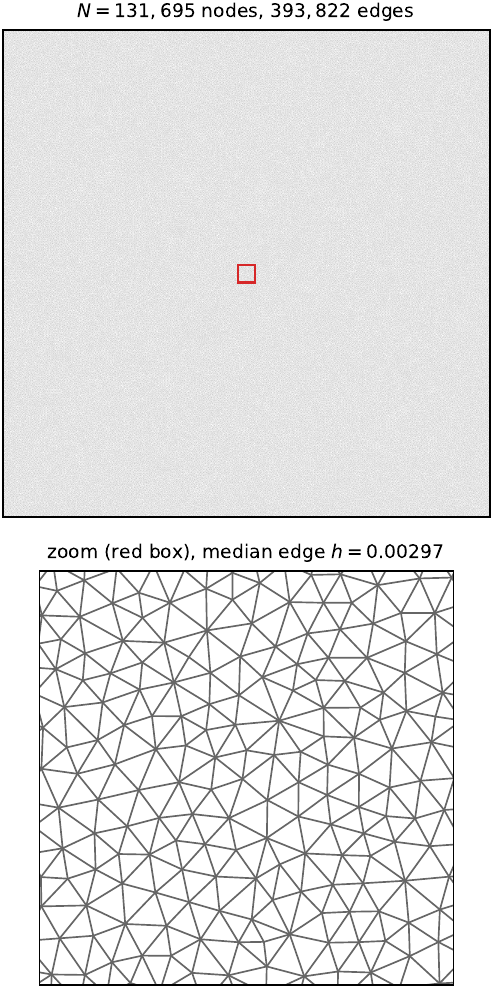}
\caption{Irregular target.}
\end{subfigure}\hfill
\begin{subfigure}[b]{0.47\textwidth}
\centering\includegraphics[width=\linewidth,height=0.21\textheight,keepaspectratio]{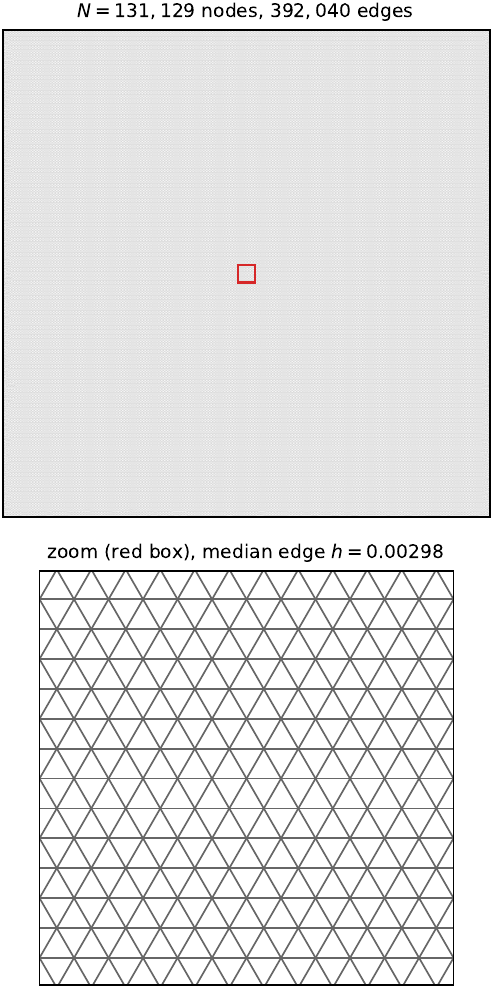}
\caption{Uniform triangular target.}
\end{subfigure}
\caption{The shear-flow target meshes used for tokeniser transfer. Source-grid fields are resampled onto their nodes.}
\label{fig:auto-meshes}
\end{figure}

\begin{figure}[h!]
\centering
\includegraphics[width=0.65\textwidth]{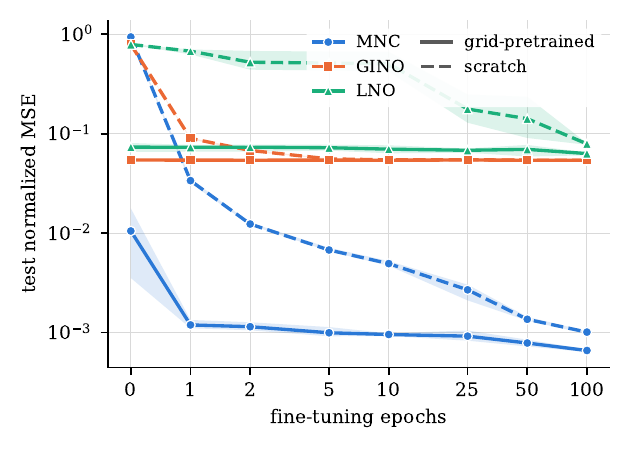}
\caption{Companion to Figure~\ref{fig:transfer-shear} on the uniform triangular lattice. Solid curves use grid-pretrained weights; dashed curves start from random initialisation. Bands show variation across three seeds. Pretraining is not included in the horizontal-axis mesh-training budget.}
\label{fig:transfer-lattice}
\end{figure}

\newpage
\subsection{Cross-domain transfer to native cylinder meshes}
\label{app:cylinder-transfer}
The original reconstruction study also transfers from TRL2D, shear flow and their mixture to cylinder flow (Table~\ref{tab:cylinder-transfer}). Channels are matched by name, and the convolutional stencil is re-instantiated at the target resolution using a characteristic mesh spacing. This differs from within-domain transfer, where the source's physical window is retained. Fine-tuning uses all 1,000 training trajectories, with no warm-up.
\makeatletter\setlength{\@fptop}{0pt}\makeatother
\begin{table}[h!]
  \centering
  \caption{Cross-domain \emph{reconstruction} transfer to native cylinder-flow meshes: normalised MSE after $e$ mesh-training epochs (mean $\pm$ standard deviation, three seeds). $e=0$ is zero-shot; ``scratch'' has no pretraining. The source is CNN for MNC, and each baseline's own tokeniser for GINO and LNO. These are not forecasting results.}
  \label{tab:cylinder-transfer}
  \small
  \setlength{\tabcolsep}{4.5pt}
  \begin{tabular}{llcccc}
    \toprule
    Tokeniser & $e$ & scratch & TRL2D & Shear flow & TRL2D$+$Shear \\
    \midrule
    MNC (ours) & 0   & 1.116 $\pm$ 0.083 & 1.086 $\pm$ 0.161 & 0.999 $\pm$ 0.413 & 0.556 $\pm$ 0.034 \\
               & 25  & \textbf{0.042 $\pm$ 0.001} & \textbf{0.039 $\pm$ 0.000} & \textbf{0.039 $\pm$ 0.002} & \textbf{0.037 $\pm$ 0.002} \\
               & 50  & \textbf{0.035 $\pm$ 0.001} & \textbf{0.031 $\pm$ 0.001} & \textbf{0.031 $\pm$ 0.001} & \textbf{0.030 $\pm$ 0.001} \\
               & 100 & \textbf{0.026 $\pm$ 0.000} & \textbf{0.024 $\pm$ 0.001} & \textbf{0.025 $\pm$ 0.001} & \textbf{0.023 $\pm$ 0.000} \\
    \midrule
    GINO       & 0   & 0.993 $\pm$ 0.034 & \textbf{0.576 $\pm$ 0.013} & \textbf{0.528 $\pm$ 0.008} & \textbf{0.547 $\pm$ 0.008} \\
               & 25  & 0.373 $\pm$ 0.024 & 0.312 $\pm$ 0.014 & 0.362 $\pm$ 0.050 & 0.318 $\pm$ 0.017 \\
               & 50  & 0.281 $\pm$ 0.017 & 0.261 $\pm$ 0.008 & 0.263 $\pm$ 0.010 & 0.270 $\pm$ 0.017 \\
               & 100 & 0.228 $\pm$ 0.011 & 0.233 $\pm$ 0.006 & 0.228 $\pm$ 0.003 & 0.237 $\pm$ 0.013 \\
    \midrule
    LNO        & 0   & 0.969 $\pm$ 0.038 & 1.237 $\pm$ 0.734 & 0.674 $\pm$ 0.017 & 0.661 $\pm$ 0.035 \\
               & 25  & 0.565 $\pm$ 0.002 & 0.327 $\pm$ 0.193 & 0.109 $\pm$ 0.007 & 0.103 $\pm$ 0.006 \\
               & 50  & 0.408 $\pm$ 0.182 & 0.109 $\pm$ 0.006 & 0.088 $\pm$ 0.010 & 0.090 $\pm$ 0.008 \\
               & 100 & 0.091 $\pm$ 0.006 & 0.080 $\pm$ 0.007 & 0.074 $\pm$ 0.012 & 0.071 $\pm$ 0.003 \\
    \bottomrule
  \end{tabular}
\end{table}

These controls limit the claim that MNC transfers best in every sense. GINO has lower zero-shot errors from either individual source; mixed-source pretraining substantially improves MNC's initialisation. After 100 target epochs, MNC has the lowest error, but the benefit of its pretraining over its own scratch run is modest (0.023-0.025 versus 0.026). LNO also benefits from pretraining here. The strong within-physics, changed-discretisation result in Figure~\ref{fig:transfer-shear} should therefore not be generalised to arbitrary cross-physics initialisation or to forecasting.